\documentclass[a4paper,UKenglish,cleveref, autoref, thm-restate]{lipics-v2021}
%This is a template for producing LIPIcs articles. 
%See lipics-v2021-authors-guidelines.pdf for further information.
%for A4 paper format use option "a4paper", for US-letter use option "letterpaper"
%for british hyphenation rules use option "UKenglish", for american hyphenation rules use option "USenglish"
%for section-numbered lemmas etc., use "numberwithinsect"
%for enabling cleveref support, use "cleveref"
%for enabling autoref support, use "autoref"
%for anonymousing the authors (e.g. for double-blind review), add "anonymous"
%for enabling thm-restate support, use "thm-restate"
%for enabling a two-column layout for the author/affilation part (only applicable for > 6 authors), use "authorcolumns"
%for producing a PDF according the PDF/A standard, add "pdfa"

\pdfoutput=1 %uncomment to ensure pdflatex processing (mandatatory e.g. to submit to arXiv)
\hideLIPIcs  %uncomment to remove references to LIPIcs series (logo, DOI, ...), e.g. when preparing a pre-final version to be uploaded to arXiv or another public repository

%\graphicspath{{./graphics/}}%helpful if your graphic files are in another directory

\bibliographystyle{plainurl}% the mandatory bibstyle

\title{Improved Approximation Ratios for the Shortest Common Superstring Problem with Reverse Complements} %TODO Please add

\titlerunning{Improved Approximation Ratios for SCS-RC} %TODO optional, please use if title is longer than one line

\author{Ryosuke Yamano}{Department of Computer Science, Graduate School of Information Science and Technology, The University of Tokyo, Japan \and Division of Medical Data Informatics, Human Genome Center, Institute of Medical Science, The University of Tokyo, Japan}{ryoyamano15@g.ecc.u-tokyo.ac.jp}{https://orcid.org/0009-0002-1683-5179}{}
\author{Tetsuo Shibuya}{Division of Medical Data Informatics, Human Genome Center, Institute of Medical Science, The University of Tokyo, Japan}
{tshibuya@hgc.jp}{https://orcid.org/0000-0003-1514-5766}{}%TODO mandatory, please use full name; only 1 author per \author macro; first two parameters are mandatory, other parameters can be empty. Please provide at least the name of the affiliation and the country. The full address is optional. Use additional curly braces to indicate the correct name splitting when the last name consists of multiple name parts.

\authorrunning{R. Yamano and T. Shibuya} %TODO mandatory. First: Use abbreviated first/middle names. Second (only in severe cases): Use first author plus 'et al.'

\Copyright{Ryosuke Yamano and Tetsuo Shibuya} %TODO mandatory, please use full first names. LIPIcs license is "CC-BY";  http://creativecommons.org/licenses/by/3.0/

\ccsdesc[100]{Theory of computation~Approximation algorithms analysis} %TODO mandatory: Please choose ACM 2012 classifications from https://dl.acm.org/ccs/ccs_flat.cfm 

\keywords{Shortest Common Superstring, Approximation Algorithms, DNA Sequencing} %TODO mandatory; please add comma-separated list of keywords

\category{} %optional, e.g. invited paper

\relatedversion{} %optional, e.g. full version hosted on arXiv, HAL, or other respository/website
%\relatedversiondetails[linktext={opt. text shown instead of the URL}, cite=DBLP:books/mk/GrayR93]{Classification (e.g. Full Version, Extended Version, Previous Version}{URL to related version} %linktext and cite are optional

%\supplement{}%optional, e.g. related research data, source code, ... hosted on a repository like zenodo, figshare, GitHub, ...
%\supplementdetails[linktext={opt. text shown instead of the URL}, cite=DBLP:books/mk/GrayR93, subcategory={Description, Subcategory}, swhid={Software Heritage Identifier}]{General Classification (e.g. Software, Dataset, Model, ...)}{URL to related version} %linktext, cite, and subcategory are optional

%\funding{(Optional) general funding statement \dots}%optional, to capture a funding statement, which applies to all authors. Please enter author specific funding statements as fifth argument of the \author macro.

\funding{This work was supported by MEXT KAKENHI Grant Numbers 21H05052, 23H03345, and 23K18501.}

%\acknowledgements{I want to thank \dots}%optional

\nolinenumbers %uncomment to disable line numbering

% my macros
\newcommand{\op}[1]{\mathrm{#1}}

% added packages
\usepackage[ruled]{algorithm2e}

%Editor-only macros:: begin (do not touch as author)%%%%%%%%%%%%%%%%%%%%%%%%%%%%%%%%%%
\EventEditors{John Q. Open and Joan R. Access}
\EventNoEds{2}
\EventLongTitle{42nd Conference on Very Important Topics (CVIT 2016)}
\EventShortTitle{CVIT 2016}
\EventAcronym{CVIT}
\EventYear{2016}
\EventDate{December 24--27, 2016}
\EventLocation{Little Whinging, United Kingdom}
\EventLogo{}
\SeriesVolume{42}
\ArticleNo{23}
%%%%%%%%%%%%%%%%%%%%%%%%%%%%%%%%%%%%%%%%%%%%%%%%%%%%%%

\begin{document}

\maketitle

%TODO mandatory: add short abstract of the document
\begin{abstract}
The Shortest Common Superstring (SCS) problem asks for the shortest string that contains each of a given set of strings as a substring. Its reverse-complement variant, the Shortest Common Superstring problem with Reverse Complements (SCS-RC), naturally arises in bioinformatics applications, where for each input string, either the string itself or its reverse complement must appear as a substring of the superstring. The well-known \textsf{MGREEDY} algorithm for the standard SCS constructs a superstring by first computing an optimal cycle cover on the overlap graph and then concatenating the strings corresponding to the cycles, while its refined variant, \textsf{TGREEDY}, further improves the approximation ratio. Although the original 4- and 3-approximation bounds of these algorithms have been successively improved for the standard SCS, no such progress has been made for the reverse-complement setting. A previous study extended \textsf{MGREEDY} to SCS-RC with a 4-approximation guarantee and briefly suggested that extending \textsf{TGREEDY} to the reverse-complement setting could achieve a 3-approximation. In this work, we strengthen these results by proving that the extensions of \textsf{MGREEDY} and \textsf{TGREEDY} to the reverse-complement setting achieve 3.75- and 2.875-approximation ratios, respectively. Our analysis extends the classical proofs for the standard SCS to handle the bidirectional overlaps introduced by reverse complements. These results provide the first formal improvement of approximation guarantees for SCS-RC, with the 2.875-approximate algorithm currently representing the best known bound for this problem.
\end{abstract}

\section{Introduction}
The Shortest Common Superstring (SCS) problem is a classical problem in string algorithms and combinatorial optimization. 
Given a set $S$ of strings, the goal is to find a shortest possible string that contains every string in $S$ as a substring. 
SCS has numerous applications in various fields \cite{Gevezes2014}, among which one of the most prominent is DNA sequencing. 
A DNA molecule consists of four nucleotides (Adenine, Thymine, Guanine, and Cytosine) and is assembled from short sequence reads. 
This process can be viewed as an instance of the SCS problem over a quaternary alphabet. 
However, unlike ordinary strings, DNA sequences are double-stranded, and sequencing technologies often produce reads whose orientation is not known, i.e., which of the two DNA-duplex strands they came from~\cite{MyersJr+2016+126+132}. 
Consequently, for each read, either the read itself or its reverse complement may occur in the original genome. 
To capture this property, the Shortest Common Superstring problem with Reverse Complements (SCS-RC) extends the classical SCS problem by requiring the superstring to contain, for each input string, either the string or its reverse complement as a substring. 
This extension provides a more realistic abstraction of genomic sequence reconstruction, yet its algorithmic properties and approximation guarantees have been studied far less extensively than those of the standard SCS problem.

For the standard SCS problem, Blum et al.\ \cite{Blum.et.al} showed that the \textsf{GREEDY} algorithm, which repeatedly merges two distinct strings with the maximum overlap, is 4-approximate.  
They also introduced two variants, \textsf{MGREEDY} and \textsf{TGREEDY}, and proved that these achieve 4- and 3-approximation ratios, respectively.  
Many subsequent works have further improved these approximation ratios \cite{CPM1996.2+2/3approx, Breslaure.1997.OverlapRotationLemma, SIAM.2+1/2approx, KAPLAN200513.greedy3.5n, SODA13.2+11/23approx, STOC22.ImprovedApproximateGuarantees}.  
Englert et al.\ \cite{englert_et_al:LIPIcs.ISAAC.2023.29} established that the \textsf{MGREEDY} algorithm achieves a guarantee of $\frac{\sqrt{67} + 2}{3} \approx 3.396$, and by combining this with the $\frac{2}{3}$-approximation algorithm for the maximum asymmetric traveling salesman problem \cite{Kaplan.MAXATSP.2/3approx, paluch_et_al:LIPIcs.STACS.2012.501}, they obtained the currently best-known approximation ratio for SCS, $\frac{\sqrt{67} + 14}{9} \approx 2.465$.

For the SCS-RC problem, Jiang et al.\ \cite{JIANG1992195} extended the work of Blum et al.\ \cite{Blum.et.al} to the reverse-complement setting.
They proved that the extension of the \textsf{MGREEDY} algorithm, which we refer to as \textsf{MGREEDY-RC} (shown in \cref{alg:MGREEDY-RC}), is a 4-approximate algorithm for SCS-RC.  
They further noted that the \textsf{TGREEDY} algorithm can be similarly extended (the \textsf{TGREEDY-RC} algorithm shown in \cref{alg:TGREEDY-RC}) to achieve a 3-approximation, relying on the fact that the \textsf{GREEDY-RC} algorithm (shown in \cref{alg:GREEDY-RC}) achieves a $\frac{1}{2}$ compression ratio, which was formally proved in \cite{FICI2016245}.

However, no further improvements for SCS-RC have been reported since then, leaving a substantial gap compared to the standard SCS.
Our analysis narrows this gap and provides new insights into greedy algorithms under the reverse-complement setting.

\subsection{Our Contributions}

We first establish \cref{thm:framework MGREEDY-RC to GREEDY-RC}, which provides a framework analogous to that used in the standard SCS problem \cite{STOC22.ImprovedApproximateGuarantees, englert_et_al:LIPIcs.ISAAC.2023.29}.

\begin{theorem} \label{thm:framework MGREEDY-RC to GREEDY-RC}
If \textsf{MGREEDY-RC} is a $(2 + \alpha)$-approximation algorithm and there exists an algorithm that achieves a compression ratio of $\delta$ for SCS-RC, then one can construct a $(2 + (1 - \delta)\alpha)$-approximation algorithm for SCS-RC.
In particular, \textsf{TGREEDY-RC} corresponds to the case $\delta = \frac{1}{2}$ within this general framework, which immediately yields a $(2 + \frac{\alpha}{2})$-approximation guarantee.
\end{theorem}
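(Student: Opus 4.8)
The plan is to transplant the classical reduction from \textsf{GREEDY} to \textsf{TGREEDY} for ordinary SCS (as reorganized in \cite{STOC22.ImprovedApproximateGuarantees, englert_et_al:LIPIcs.ISAAC.2023.29}) to the reverse-complement setting, tracking the bidirectional (string-or-reverse-complement) overlaps at every step. Write $\lVert S\rVert$ for the total length of the strings in $S$ and $\mathrm{OPT}=\mathrm{OPT}_{\mathrm{RC}}(S)$ for the SCS-RC optimum on $S$. Let $A$ be any algorithm with compression ratio $\delta$ for SCS-RC, and consider the algorithm that runs \textsf{MGREEDY-RC} on $S$ to obtain its optimal cycle cover $\mathcal C$ together with the set $T=\{t_c:c\in\mathcal C\}$ of cycle strings it concatenates, and then returns $A(T)$. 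Since \textsf{MGREEDY-RC} outputs the plain concatenation of the $t_c$, its output has length $\lVert T\rVert$, so by hypothesis $\lVert T\rVert\le(2+\alpha)\,\mathrm{OPT}$. For correctness, the reverse-complement covering relation composes: if $A(T)$ contains $t_c$ or $\overline{t_c}$, and $t_c$ contains $s$ or $\overline s$ for every input string $s$ routed into cycle $c$, then $A(T)$ contains $s$ or $\overline s$; hence $A(T)$ is a valid common superstring with reverse complements of $S$. Taking $A=\textsf{GREEDY-RC}$, which has $\delta=\tfrac12$ by \cite{FICI2016245}, recovers exactly \textsf{TGREEDY-RC}.

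For the length bound, let $w^\ast$ denote the weight of $\mathcal C$. The compression guarantee of $A$ gives
\[
|A(T)|\ \le\ \lVert T\rVert-\delta\bigl(\lVert T\rVert-\mathrm{OPT}_{\mathrm{RC}}(T)\bigr)\ =\ (1-\delta)\,\lVert T\rVert+\delta\,\mathrm{OPT}_{\mathrm{RC}}(T).
\]
The key structural step is the inequality $\mathrm{OPT}_{\mathrm{RC}}(T)\le 2\,\mathrm{OPT}$, which I would obtain, exactly as in the standard setting, from
\[
\mathrm{OPT}_{\mathrm{RC}}(T)\ \le\ \mathrm{OPT}+w^\ast\ \le\ 2\,\mathrm{OPT};
\]
the second inequality is the standard fact that the orientation and ordering an optimal SCS-RC superstring induces on $S$ closes into a cycle cover of weight at most $\mathrm{OPT}$, so $w^\ast\le\mathrm{OPT}$. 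Combining the three displays,
\[
|A(T)|\ \le\ (1-\delta)(2+\alpha)\,\mathrm{OPT}+2\delta\,\mathrm{OPT}\ =\ \bigl(2+(1-\delta)\alpha\bigr)\mathrm{OPT},
\]
which is the asserted bound; instantiating $\delta=\tfrac12$ yields the $\bigl(2+\tfrac{\alpha}{2}\bigr)$-approximation for \textsf{TGREEDY-RC}.

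The main obstacle is the structural inequality $\mathrm{OPT}_{\mathrm{RC}}(T)\le\mathrm{OPT}+w^\ast$, which I would prove by exhibiting a superstring of $T$ of that length built from an optimal SCS-RC superstring $W$ of $S$. For each cycle $c$, break $c$ at the edge entering its longest string $r_c$; by the usual properties of optimal cycle covers (no string is a substring of another, and overlaps do not wrap around), $r_c$ then occurs as a clean prefix of $t_c$, so $t_c=r_c\,v_c$ with $|v_c|=|t_c|-|r_c|\le w_c$. Since $W$ contains $r_c$ or $\overline{r_c}$, re-inflate each representative inside $W$: if $r_c$ occurs in $W$, splice $v_c$ into $W$ immediately after that occurrence; if only $\overline{r_c}$ occurs, include $\overline{t_c}=\overline{v_c}\,\overline{r_c}$ instead and splice $\overline{v_c}$ immediately before it. Each cycle adds at most $w_c$ characters, for a total of at most $\sum_{c}w_c=w^\ast$, producing a string of length at most $\mathrm{OPT}+w^\ast$ that contains $t_c$ or $\overline{t_c}$ for every $c$. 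The delicate part is to choose the occurrences so that these splices never fall strictly inside one another; this is where the proof must invoke the cross-cycle overlap bound for optimal cycle covers, in the two-sided form forced by reverse complements, to rule out deep intrusions of one representative's occurrence into another's. The remaining steps are the routine arithmetic above.
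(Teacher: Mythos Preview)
Your reduction and arithmetic match the paper exactly: run \textsf{MGREEDY-RC} to obtain $T$, apply the $\delta$-compression algorithm $A$ to $T$, and combine $\lVert T\rVert\le(2+\alpha)\,\mathrm{OPT}$ with $\mathrm{OPT}_{\mathrm{RC}}(T)\le 2\,\mathrm{OPT}$ via $|A(T)|\le(1-\delta)\lVert T\rVert+\delta\,\mathrm{OPT}_{\mathrm{RC}}(T)$.

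The gap is in your argument for $\mathrm{OPT}_{\mathrm{RC}}(T)\le\mathrm{OPT}+w^\ast$. You claim that the longest string $r_c$ is a prefix of $t_c$, but $t_c$ is by definition \textsf{MGREEDY-RC}'s actual output $\langle s_{i_1}',\dots,s_{i_r}'\rangle$, in which the closing edge $(s_{i_r}',s_{i_1}')$ has \emph{smallest} overlap; there is no reason $r_c=s_{i_1}'$, so $r_c$ need not be a prefix of $t_c$. If you silently redefine $t_c$ as the rotation starting at $r_c$, you are no longer feeding $A$ the set \textsf{MGREEDY-RC} produced, and the hypothesis $\lVert T\rVert\le(2+\alpha)\,\mathrm{OPT}$ no longer applies to that set. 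Even granting the prefix claim, splicing $v_c$ after $r_c$ (or $\overline{v_c}$ before $\overline{r_c}$) can sever another representative whose occurrence in $W$ extends past the splice point; you acknowledge this as ``delicate'' and defer to a two-sided cross-cycle overlap bound that you do not carry out. The paper bypasses all of this with a single observation: for each cycle $C=s_{i_1}',\dots,s_{i_r}',s_{i_1}'$ take the representative $s_{i_1}'$ (the actual first vertex in \textsf{MGREEDY-RC}'s ordering) and form
\[
e \;=\; \op{pref}(s_{i_1}',s_{i_2}')\cdots\op{pref}(s_{i_r}',s_{i_1}')\,s_{i_1}' \;=\; \langle s_{i_1}',\dots,s_{i_r}',s_{i_1}'\rangle,
\]
which has $s_{i_1}'$ as \emph{both} prefix and suffix, contains $t_c=\langle s_{i_1}',\dots,s_{i_r}'\rangle$ as a prefix, and satisfies $|e|=|s_{i_1}'|+w(C)$. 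In an optimal SCS-RC superstring $u$ of the representatives, replacing one occurrence of $s_{i_1}'$ (resp.\ $\overline{s_{i_1}'}^R$) by $e$ (resp.\ $\overline{e}^R$) is then non-destructive by substring-freeness alone: any other representative's occurrence lies entirely within the preserved prefix $A\,s_{i_1}'$ or the preserved suffix $s_{i_1}'\,B$ of $u=A\,s_{i_1}'\,B$, since otherwise it would contain $s_{i_1}'$ as a proper substring. No cross-cycle overlap analysis is needed, and the total length increase is exactly $\sum_C w(C)=w^\ast$.
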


We extend the work of Kaplan and Shafrir \cite{KAPLAN200513.greedy3.5n} to the reverse-complement setting, improving the approximation guarantee of \textsf{MGREEDY-RC} from 4 to 3.75.
The presence of reverse complements prevents the direct application of their method, since it relies on the overlap rotation lemma (stated in \cref{lm:overlap rotation lemma}), which no longer holds when the rotated string is reverse complemented.
This highlights the additional structural challenges of the SCS-RC problem compared with the standard SCS problem.
Nevertheless, we manage to adapt part of their improvement by exploiting the properties of reverse complements.

\begin{theorem} \label{thm:MGREEDY-RC 3.75-approx}
\textsf{MGREEDY-RC} is a $3.75$-approximate algorithm.
\end{theorem}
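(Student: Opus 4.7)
The plan is to lift Kaplan and Shafrir's 3.5-approximation analysis of \textsf{MGREEDY} on standard SCS \cite{KAPLAN200513.greedy3.5n} to the reverse-complement setting. Their argument has two ingredients: the baseline length bound for \textsf{MGREEDY}, which relates the output length to the weight $W(\mathcal{C})$ of the cycle cover $\mathcal{C}$ computed by the algorithm, and a sharpened lower bound on $\mathrm{OPT}$ obtained by applying the overlap rotation lemma (\cref{lm:overlap rotation lemma}) to each cycle of $\mathcal{C}$. The sharpening extracts a per-cycle saving proportional to the cycle weight, which is exactly what collapses the analysis from $4$ to $3.5$. My goal is to replay this architecture in the bidirectional overlap graph used by \textsf{MGREEDY-RC}, recovering as much of the per-cycle saving as is compatible with the presence of reverse-complement merges, and to combine the resulting improvement with the 4-approximation baseline of Jiang et al.\ \cite{JIANG1992195}.

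The main obstacle is that the overlap rotation lemma fails as soon as a cycle of $\mathcal{C}$ uses a reverse-complement merge: rotating such a cycle effectively reverse-complements a suffix of its strings, and the overlap between $x$ and $\mathrm{rc}(y)$ is in general unrelated to the overlap between $x$ and $y$. My plan is therefore to partition the cycles of $\mathcal{C}$ into two classes: \emph{pure} cycles, which admit an orientation along which no reverse-complement merge occurs, and \emph{mixed} cycles, which do not. On pure cycles the Kaplan-Shafrir argument transfers essentially verbatim and yields the full per-cycle saving. On mixed cycles I would cut the cycle at a reverse-complement edge to obtain two oriented chains, apply the overlap rotation lemma to each chain separately, and glue the two contributions using the involution identity $\mathrm{ov}(x, y) = \mathrm{ov}(\mathrm{rc}(y), \mathrm{rc}(x))$. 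Because the two chains must be charged against disjoint parts of $\mathrm{OPT}$ to avoid double counting, this procedure recovers only half of the per-cycle saving available in the pure case.

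Summing the pure and mixed per-cycle savings and substituting into the baseline length bound yields the approximation ratio. The worst case arises when every cycle of $\mathcal{C}$ is mixed, and in that case the halved per-cycle saving produces precisely the ratio $3.75$, while any pure cycle only improves it further. The most delicate step will be the mixed-cycle analysis: I need to certify that cutting at a reverse-complement edge produces two chains whose rotated overlap structures can be aligned with disjoint segments of $\mathrm{OPT}$, regardless of which orientations $\mathrm{OPT}$ happens to pick for the strings of the cycle relative to those used in $\mathcal{C}$. I expect this to reduce to a case analysis on the orientation pattern along the cycle, controlled by the involutive nature of the reverse-complement operation, and this case analysis is where the bulk of the work will lie.
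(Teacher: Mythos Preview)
Your diagnosis of the obstacle is misplaced, and because of that the proposed plan does not reach the theorem. In the model of \cref{alg:MGREEDY-RC}, every cycle $C=s_{i_1}',\dots,s_{i_r}',s_{i_1}'$ in $CYC(G_S)$ is already an ordinary directed cycle on strings in $S\cup\bar S^R$: once the greedy merge that flips a path has occurred, the resulting path is a plain directed path, and the cycle that closes it has no internal ``reverse-complement edges''. Rotating such a cycle does \emph{not} reverse-complement any suffix; \cref{lm:extract critical rotation from cycle} applies verbatim to every cycle and produces a critical-rotation representative $x_C$. There is thus no meaningful pure/mixed dichotomy at the cycle level, and ``cutting at a reverse-complement edge to obtain two chains'' is attacking a difficulty that is not there (and the overlap rotation lemma concerns periodic strings, not chains).

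The real obstacle, and the one the paper addresses, lies one level up. Setting $A=\{x_C:C\in CYC(G_S)\}$, one has $\|T\|\le\|A\|=\op{OPT}(A)+\op{OV}(A)$ and $\op{OPT}(A)\le\op{OPT}(S)+w(CYC(G_S))$ exactly as in Kaplan--Shafrir. The reverse-complement complication enters only when bounding $\op{OV}(A)$: in an optimal SCS-RC superstring for $A$ each representative may appear either as $x_C$ or as $\bar{x_C}^R$, and only the former carries the critical-rotation bound $\op{ov}(\cdot,x_C)\le w(\cdot)+\tfrac12 w(C)$; for $\bar{x_C}^R$ one has merely $\op{ov}(\cdot,\bar{x_C}^R)\le w(\cdot)+w(C)$ from \cref{lm:overlap weak lemma}. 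The paper's key step is to note that the reverse complement of an optimal superstring for $A$ is itself optimal and flips every orientation, so averaging the two per-cycle bounds yields $\op{OV}(A)\le 1.75\,w(CYC(G_S))$, whence $\|T\|\le\op{OPT}(S)+2.75\,w(CYC(G_S))\le 3.75\,\op{OPT}(S)$. Your plan never engages with this orientation issue in the superstring for $A$, and without that averaging trick (or an equivalent device) the argument stalls at the baseline $4$.
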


Combining \cref{thm:framework MGREEDY-RC to GREEDY-RC} and \cref{thm:MGREEDY-RC 3.75-approx}, we immediately obtain improved guarantees for \textsf{TGREEDY-RC}, tightening its approximation ratio from $3$ to $2.875$, which currently represents the best-known bound for SCS-RC.

\begin{theorem} \label{thm:TGREEDY-RC 2.875 approx}
\textsf{TGREEDY-RC} is a $2.875$-approximate algorithm.
\end{theorem}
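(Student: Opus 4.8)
The plan is to obtain \cref{thm:TGREEDY-RC 2.875 approx} as an immediate corollary of \cref{thm:framework MGREEDY-RC to GREEDY-RC} and \cref{thm:MGREEDY-RC 3.75-approx}. First I would rewrite the guarantee of \cref{thm:MGREEDY-RC 3.75-approx} in the form $2 + \alpha = 3.75$, so that \textsf{MGREEDY-RC} supplies exactly the hypothesis of \cref{thm:framework MGREEDY-RC to GREEDY-RC} with $\alpha = \tfrac{7}{4}$. Next I would invoke the ``in particular'' clause of \cref{thm:framework MGREEDY-RC to GREEDY-RC}, which identifies \textsf{TGREEDY-RC} with the instantiation $\delta = \tfrac12$ of the framework; this is the point at which the $\tfrac12$-compression guarantee of \textsf{GREEDY-RC} established in \cite{FICI2016245} enters. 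Substituting $\alpha = \tfrac74$ and $\delta = \tfrac12$ into the bound $2 + (1-\delta)\alpha$ then yields $2 + \tfrac12 \cdot \tfrac74 = 2 + \tfrac78 = \tfrac{23}{8} = 2.875$, which is the claimed approximation ratio.

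The only point that would require care is that the two ingredients compose cleanly through the intended interface: I would double-check that the $3.75$ bound of \cref{thm:MGREEDY-RC 3.75-approx} is genuinely of the additive ``$2+\alpha$'' shape demanded by the framework — that is, that its proof bounds the \textsf{MGREEDY-RC} output length by twice the optimum plus an $\alpha$-scaled slack term of the kind \cref{thm:framework MGREEDY-RC to GREEDY-RC} is built to absorb, rather than merely a raw multiplicative estimate — and that \textsf{TGREEDY-RC} as defined in \cref{alg:TGREEDY-RC} is precisely the cycle-cover-then-\textsf{GREEDY-RC}-per-cycle construction underlying the $\delta=\tfrac12$ case. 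Since both \cref{thm:framework MGREEDY-RC to GREEDY-RC} and \cref{thm:MGREEDY-RC 3.75-approx} are already stated and may be assumed, no genuine obstacle remains at this stage: all of the real difficulty has been discharged in \cref{thm:framework MGREEDY-RC to GREEDY-RC} (adapting the reduction of \cite{STOC22.ImprovedApproximateGuarantees, englert_et_al:LIPIcs.ISAAC.2023.29} to the bidirectional-overlap setting) and in \cref{thm:MGREEDY-RC 3.75-approx} (adapting \cite{KAPLAN200513.greedy3.5n} around the failure of the overlap rotation lemma under reverse complementation), so \cref{thm:TGREEDY-RC 2.875 approx} follows by the arithmetic above.
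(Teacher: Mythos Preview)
Your proposal is correct and matches the paper's own derivation: the paper states explicitly that \cref{thm:TGREEDY-RC 2.875 approx} follows immediately by combining \cref{thm:framework MGREEDY-RC to GREEDY-RC} (with $\delta=\tfrac12$) and \cref{thm:MGREEDY-RC 3.75-approx} (giving $\alpha=\tfrac74$), exactly as you outline. One small clarification: the framework only requires the raw multiplicative bound $||T||\le(2+\alpha)\op{OPT}(S)$, not any special additive structure, and \textsf{TGREEDY-RC} applies \textsf{GREEDY-RC} once to the whole set $T$, not per cycle---but neither point affects your argument.
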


\section{Preliminaries}
Over the alphabet $\Sigma = \{\texttt{a}, \texttt{t}, \texttt{g}, \texttt{c}\}$, we define the \emph{complement mapping} $\bar{\cdot} : \Sigma \to \Sigma$ by
\[
\bar{\texttt{a}} = \texttt t, \quad \bar{\texttt t} = \texttt a, \quad \bar{\texttt g} = \texttt c, \quad \bar{\texttt c} = \texttt g.
\]
For a string $s = b_1 b_2 \dots b_n \in \Sigma^*$, its \emph{reverse complement} is defined as
\[
\bar{s}^R = \bar{b_n}\, \bar{b_{n-1}}\, \dots\, \bar{b_1}.
\]
In other words, $\bar{s}^R$ is obtained by reversing $s$ 
and then replacing each character $b_i \in \Sigma$ with its complement $\bar{b_i}$. 
Let $S = \{s_1, \dots, s_m\}$ be a set of strings over $\Sigma$, and define $\bar{S}^R = \{\bar{s_1}^R, \dots, \bar{s_m}^R\}$. 
Without loss of generality, we assume that $S \cup \bar{S}^R$ is \emph{substring-free}, i.e., no string in $S \cup \bar{S}^R$ is a substring of another.
For a string $x$, we use
\[
x' \in \{ x, \bar{x}^R \}
\] 
to denote either $x$ or its reverse complement.
We will use this notation consistently throughout the paper.
Let us formally define the SCS-RC problem:

\begin{definition}[Shortest Common Superstring with Reverse Complements (SCS-RC)]
\leavevmode\\
\textbf{Input:} A set of strings $S = \{s_1, \dots, s_m\}$ over an alphabet $\Sigma$.\\
\textbf{Output:} The shortest string $s$ such that for each $s_i \in S$,  
either $s_i$ or its reverse complement $\bar{s_i}^R$ appears as a substring of $s$.
\end{definition}

We call a (not necessarily shortest) string that contains, for each $s_i \in S$, either $s_i$ or its reverse complement $\bar{s_i}^R$, an \emph{approximate solution} for the SCS-RC instance $S$.

There are two common ways to measure the quality of approximation algorithms.

\begin{definition}[Length ratio and compression ratio]
Let $\op{ALG}(S)$ denote the length of the approximate solution for the SCS-RC instance $S$ produced by an algorithm $\op{ALG}$, and let $\op{OPT}(S)$ denote the length of an optimal solution.  
We define the \emph{length ratio} of the algorithm as
$\frac{\op{ALG}(S)}{\op{OPT}(S)}$,
and the \emph{compression ratio} as
$\frac{||S|| - \op{ALG}(S)}{||S|| - \op{OPT}(S)}$,
where $||S|| = \sum_{s_i \in S} |s_i|$.
\end{definition}

Since $\op{ALG}(S) \ge \op{OPT}(S)$, 
we distinguish between two notions of approximation. 
When $\delta > 1$, a $\delta$-approximation refers to the \emph{length ratio}, 
and when $\delta < 1$, it refers to the \emph{compression ratio}. 
Thus, if the type of ratio is omitted, it is determined by whether $\delta$ is greater or less than 1.

For two (not necessarily distinct) strings $x$ and $y$, 
let $v$ be the longest string such that 
$x = u v$ and $y = v w$ for some nonempty strings $u$ and $w$.  
We call $v$ the \emph{overlap} between $x$ and $y$, 
and denote its length by $|v| = \op{ov}(x, y)$.
The string $u$ is called the \emph{prefix} of $x$ with respect to $y$, 
and is denoted by $\op{pref}(x, y)$.  
We define the \emph{distance} from $x$ to $y$ as 
\[
\op{dist}(x, y) = |\op{pref}(x, y)| = |x| - \op{ov}(x, y).
\]
For a given sequence of strings $x_1, \dots, x_r$, 
we define
\[
\langle x_1, \dots, x_r \rangle 
= \op{pref}(x_1, x_2)
  \op{pref}(x_2, x_3)
  \dots
  \op{pref}(x_{r-1}, x_r)
  x_r.
\]
This string is the shortest one in which 
$x_1, \dots, x_r$ appear in this order as substrings.
As observed in \cite{Blum.et.al}, the optimal solution of SCS-RC must have the form  
$\langle s_{i_1}', \dots, s_{i_m}' \rangle$
for some permutation $i_{1}, \dots, i_{m}$ of $\{1, \dots, m\}$.

The \emph{distance graph} $G_{S} = (V, E, w)$ is the weighted complete directed graph constructed from the strings in $S \cup \bar{S}^R$.  
The vertex set is $V = S \cup \bar{S}^R$, and the edge set is $E = \{(x, y) \mid x, y \in V\}$.  
The weight of each edge $(x, y) \in E$ is defined as $w(x, y) = \op{dist}(x, y)$.  
The \emph{overlap graph} is defined analogously, except that the weight of each edge $(x, y)$ is given by $w(x, y) = \op{ov}(x, y)$.
Each path $x_1, \dots, x_r$ in $G_{S}$ corresponds to the string $\langle x_1, \dots, x_r\rangle$.  
Let $C = x_1, \dots, x_r, x_1$ be a cycle in $G_S$.  
We define the \emph{weight} of the cycle $C$ as
$w(C) = \sum_{i=1}^r \op{dist}(x_i, x_{i+1})$,
where we set $x_{r+1} = x_1$.
A \emph{cycle cover} of $G_{S}$ is a set of vertex-disjoint cycles such that, for each $1 \le i \le m$,  
exactly one of $s_i$ and $\bar{s_i}^R$ is contained in the cycles.  
A cycle cover of $G_S$ is said to be \emph{optimal} if it has the minimum total weight among all possible cycle covers.  
We denote an optimal cycle cover of $G_S$ by $\op{CYC}(G_S)$, and its total weight by $w(\op{CYC}(G_S))$.
Since an optimal solution to SCS-RC corresponds to a single cycle that contains exactly one of $s_i$ and $\bar{s_i}^R$ for each $1 \le i \le m$, 
it can be viewed as a special case of a cycle cover.  
Therefore, we have the following inequation
\begin{equation} \label{eq: w <= n}
w(\op{CYC}(G_S)) \le \op{OPT}(S).
\end{equation}

\subsection{Greedy algorithms}
\begin{algorithm}[t]
\DontPrintSemicolon
\caption{\textsf{MGREEDY-RC}}
\label{alg:MGREEDY-RC}
\SetKwInOut{Input}{Input}\SetKwInOut{Output}{Output}
\Input{A set of strings $S$}
\Output{An approximate solution for the SCS-RC instance $S$}
$T \leftarrow \emptyset$\;
\While{$S \neq \emptyset$}{
$P \leftarrow \{(x, y) \in (S \cup \bar{S}^R) \times (S \cup \bar{S}^R) \mid (x = y)~\text{or}~(x \neq y~\text{and} ~\bar{x}^R \neq y)\}$ \;
take $(x, y)$ from $P$ that maximizes the value $\op{ov}(x, y)$ \;
\uIf{x = y}{
    $S \leftarrow S \setminus \{x, \bar{x}^R\}$\;
    $T \leftarrow T \cup \{x\}$\;
}
% \uElseIf{$\op{ov}(x, y) = 0$}{
%     $T \leftarrow T \cup S$\;
%     $S \leftarrow \emptyset$\;
% }
\Else {
    $S \leftarrow S \setminus \{x, \bar{x}^R, y, \bar{y}^R\}$\;
    $S \leftarrow S \cup \{\langle x,y \rangle\}$\;
}
}
Concatenate all the strings in $T$\;
\end{algorithm}

\begin{algorithm}[t]
\DontPrintSemicolon
\caption{\textsf{GREEDY-RC}}
\SetKwInOut{Input}{Input}\SetKwInOut{Output}{Output}
\Input{A set of strings $S$}
\Output{An approximate solution for the SCS-RC instance $S$}
\While{$|S| > 1$}{
$P \leftarrow \{(x, y) \in (S \cup \bar{S}^R) \times (S \cup \bar{S}^R) \mid x \neq y~\text{and} ~\bar{x}^R \neq y)\}$ \;
take $(x, y)$ from $P$ that maximizes the value $\op{ov}(x, y)$ \;
$S \leftarrow S \setminus \{x, \bar{x}^R, y, \bar{y}^R\}$\;
$S \leftarrow S \cup \{\langle x,y \rangle\}$\;
}
Return the only element of $S$\;
\label{alg:GREEDY-RC}
\end{algorithm}

\begin{algorithm}[t]
\DontPrintSemicolon
\caption{\textsf{TGREEDY-RC}}
\SetKwInOut{Input}{Input}\SetKwInOut{Output}{Output}
\Input{A set of strings $S$}
\Output{An approximate solution for the SCS-RC instance $S$}
1. Compute the set $T$ using the \textsf{MGREEDY-RC} algorithm\;
2. Apply the \textsf{GREEDY-RC} algorithm to $T$\;
\label{alg:TGREEDY-RC}
\end{algorithm}

Jiang et al.\ \cite{JIANG1992195} introduced a greedy algorithm shown in \cref{alg:MGREEDY-RC}, which we refer to as \textsf{MGREEDY-RC}.  
Strictly speaking, they considered \emph{reversals} rather than \emph{reverse complements}, but their formulation and analysis can be adapted directly to our setting by interpreting their $s^R$ as $\bar{s}^R$.
\textsf{MGREEDY-RC} disallows merging string pairs of the form $(x, \bar{x}^R)$ for $x \ne \bar{x}^R$, 
since both (distinct) $x$ and $\bar{x}^R$ need not be included simultaneously. 
As discussed in \cite{JIANG1992195}, the \textsf{MGREEDY-RC} algorithm can be viewed as constructing a cycle cover.  
When it merges two strings $x = \langle s_{i_1}', \dots, s_{i_h}' \rangle$ and 
$y = \langle s_{j_1}', \dots, s_{j_g}' \rangle$, this corresponds to selecting an edge between $s_{i_h}'$ and $s_{j_1}'$ to create a new path $s_{i_1}', \dots, s_{i_h}', s_{j_1}', \dots, s_{j_g}'$.  
When it merges $x$ and $\bar{y}^R$, it first replaces the second path $s_{j_1}', \dots, s_{j_g}'$ with $\bar{s_{j_g}'}^R, \dots, \bar{s_{j_1}'}^R$, and then merges the paths to form $s_{i_1}', \dots, s_{i_h}', \bar{s_{j_g}'}^R, \dots, \bar{s_{j_1}'}^R$.  
When \textsf{MGREEDY-RC} moves a string $x = \langle s_{i_1}', \dots, s_{i_h}' \rangle$ from $S$ to $T$, it closes the corresponding path into a cycle $s_{i_1}', \dots, s_{i_h}', s_{i_1}'$.  
These operations together form a cycle cover of $G_S$, and it has been shown that this cover is optimal.

\begin{lemma}[\cite{JIANG1992195}]
\textsf{MGREEDY-RC} creates an optimal cycle cover $CYC(G_S)$.
\end{lemma}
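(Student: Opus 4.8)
The plan is to adapt the exchange argument of Blum et al.~\cite{Blum.et.al}, in the reverse‑complement form used by Jiang et al.~\cite{JIANG1992195}, and to show directly that the cycle cover produced by \textsf{MGREEDY-RC} has minimum weight. As already noted above, over its execution the algorithm commits exactly $m$ edges of $G_S$ — one for each merge and one for each cycle‑closing — and these form a cycle cover $\mathcal{C}_{\mathrm{alg}}$ of $G_S$; list them as $e_1,\dots,e_m$ in order of commitment, with $e_t=(u_t,v_t)$ and $u_t,v_t\in S\cup\bar S^R$. I would first record two structural facts. \emph{(Monotonicity.)} The committed overlaps satisfy $\op{ov}(e_1)\ge\cdots\ge\op{ov}(e_m)$, because substring‑freeness forces $\op{ov}(\langle x,y\rangle,w)=\op{ov}(y,w)$, $\op{ov}(w,\langle x,y\rangle)=\op{ov}(w,x)$ and $\op{ov}(\langle x,y\rangle,\langle x,y\rangle)=\op{ov}(y,x)$ — and the same for reverse complements — so merging never creates a longer overlap than those already present. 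Hence, at the step committing $e_t$ (overlap $\ell_t$), every non‑forbidden ordered pair $(a,b)$ over $S\cup\bar S^R$ such that neither $a$ nor $\bar a^R$ is a tail, and neither $b$ nor $\bar b^R$ is a head, of any $e_j$ with $j<t$, satisfies $\op{ov}(a,b)\le\ell_t$. \emph{(Monge‑type inequality.)} For strings $a,b,c,d$ with $\op{ov}(a,b)\ge\max\{\op{ov}(a,d),\op{ov}(c,b)\}$ one has $\op{ov}(a,b)+\op{ov}(c,d)\ge\op{ov}(a,d)+\op{ov}(c,b)$; this follows from the usual prefix/suffix containment argument — if the right‑hand side exceeds $r:=\op{ov}(a,b)$, then writing $p:=\op{ov}(a,d)\le r$ and $q:=\op{ov}(c,b)\le r$, the length‑$p$ suffix of $a$ and the length‑$q$ prefix of $b$ are respectively a suffix and a prefix of the common length‑$r$ string, and comparing positions shows the length‑$(p+q-r)$ suffix of $c$ equals the length‑$(p+q-r)$ prefix of $d$.

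Next I would prove, by induction on $t$, that some optimal cycle cover $\mathcal{C}^{*}$ of $G_S$ contains $e_1,\dots,e_t$; the base case $t=0$ is immediate, and at $t=m$ this forces $\mathcal{C}_{\mathrm{alg}}=\mathcal{C}^{*}$ (both have exactly $m$ edges), giving the lemma. For the inductive step, take $\mathcal{C}^{*}\supseteq\{e_1,\dots,e_{t-1}\}$, suppose $e_t=(u,v)\notin\mathcal{C}^{*}$, and assume for the moment that $\mathcal{C}^{*}$ presents the orientations $u$ and $v$. Let $v'$ be the out‑partner of $u$ and $u''$ the in‑partner of $v$ in $\mathcal{C}^{*}$. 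The key point, carried over from the standard analysis, is that since $\mathcal{C}^{*}$ already contains $e_1,\dots,e_{t-1}$, neither $v'$ nor $u''$ (nor their reverse complements) can be a head or tail of any $e_j$, $j<t$ — otherwise that $e_j$ would force $\mathcal{C}^{*}$ to use $u$ as a tail or $v$ as a head, contradicting that the algorithm chose them only at step $t$ — so $(u,v')$ and $(u'',v)$ fall under the monotonicity bound and $\op{ov}(u,v')\le\op{ov}(u,v)$, $\op{ov}(u'',v)\le\op{ov}(u,v)$. Rerouting $\mathcal{C}^{*}$ by deleting $u\!\to\!v'$ and $u''\!\to\!v$ and inserting $u\!\to\!v$ and $u''\!\to\!v'$ again gives a union of vertex‑disjoint cycles on the same presented vertex set (it merges two cycles or splits one, and introduces no forbidden edge because a cycle cover never presents both a vertex and its reverse complement), still contains $e_1,\dots,e_{t-1}$, and by the Monge‑type inequality does not increase the weight. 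The cycle‑closing step, where $u$ and $v$ are the two ends of one active path and hence already lie on a common cycle of $\mathcal{C}^{*}$, is handled by the identical rerouting, which now splits that cycle so the algorithm's path becomes its own cycle.

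Finally, the reverse complements contribute the two points I expect to be the crux. (i) $\mathcal{C}^{*}$ presents only one of $s_i,\bar{s_i}^R$ for each $i$, and at a step where $u$ or $v$ is a fresh single string it may present the \emph{opposite} orientation. I would handle this using that, for any cycle cover $\mathcal{C}$ of $G_S$ and any subset $D$ of its cycles, $(\mathcal{C}\setminus D)\cup\bar{D}^R$ is again a cycle cover of $G_S$ of the same total weight (since $w(\bar{D}^R)=w(D)$): one flips the orientation of the offending cycle before rerouting, which is safe provided the flipped cycle carries no previously committed edge — equivalently, provided the invariant is strengthened so that $\mathcal{C}^{*}$ presents, for each active path of the algorithm, the orientations the algorithm currently uses. (ii) \textsf{MGREEDY-RC} skips a maximum‑overlap pair whenever it has the forbidden form $(z,\bar z^R)$; this is harmless because $(z,\bar z^R)$ belongs to no cycle cover of $G_S$ at all (a cycle cover presents exactly one of $z$ and $\bar z^R$), so its being skipped never conflicts with the invariant. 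Everything else — the Monge inequality, the fact that a reroute of two edges preserves the cycle‑cover structure, and the induction itself — is routine once the standard‑SCS proof is in hand; the real work, and what genuinely distinguishes SCS‑RC here, is the orientation bookkeeping in (i): keeping the optimal cover we carry along consistent simultaneously at $u$, at $v$, and with all of $e_1,\dots,e_{t-1}$ while the algorithm commits to orientations on the fly (with only the trivial extra remark needed for palindromic strings $z=\bar z^R$).
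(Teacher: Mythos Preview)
The paper does not give its own proof of this lemma; it is stated with a citation to Jiang et al.~\cite{JIANG1992195} and then used as a black box. So there is no in-paper argument to compare against. Your sketch follows the natural route (and the one taken in the cited source): monotonicity of the greedy overlaps under substring-freeness, the Monge-type overlap inequality, and an inductive exchange argument that carries an optimal cover $\mathcal{C}^*$ along the greedy choices, with the extra orientation-flipping step that distinguishes the reverse-complement setting from Blum et al.

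One point worth tightening, exactly where you flag the ``real work'': the phrase ``safe provided the flipped cycle carries no previously committed edge --- equivalently, provided the invariant is strengthened so that $\mathcal{C}^*$ presents, for each active path, the orientations the algorithm currently uses'' is not quite right. The cycle of $\mathcal{C}^*$ you flip will in general carry committed edges --- namely those of the very active path being reoriented --- and that is harmless, since those edges flip in lockstep with the algorithm's flip. The genuine hazard is when that same cycle of $\mathcal{C}^*$ also contains a \emph{different} active path whose orientation the algorithm is \emph{not} changing at this step; a single whole-cycle flip then breaks the invariant for that other path, and your strengthened invariant does not by itself rule this configuration out. Resolving it requires a bit more: either exploiting that committing $(u,v)$ and committing $(\bar v^R,\bar u^R)$ are equivalent choices (so one has freedom in which side to flip), or allowing a reroute that reverses only a segment of a cycle rather than the whole cycle, with the Monge inequality (applied after rewriting one of the two deleted overlaps via $\op{ov}(p,q)=\op{ov}(\bar q^R,\bar p^R)$) still controlling the cost. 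So the outline is correct and the gap is fixable, but the orientation bookkeeping as written is not yet complete.
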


Fici et al.\ \cite{FICI2016245} proved that the \textsf{GREEDY-RC} algorithm, given in \cref{alg:GREEDY-RC}, achieves a compression ratio of $\frac{1}{2}$, and that this bound is tight for the algorithm. As in \cite{JIANG1992195}, they actually considered \emph{reversals} rather than \emph{reverse complements}, but the same analysis applies when interpreting $s^R$ as $\bar{s}^R$.

\begin{lemma}[\cite{FICI2016245}] \label{lm:GREEDY-RC 0.5 compression}
\textsf{GREEDY-RC} achieves a compression ratio of $\frac{1}{2}$.
\end{lemma}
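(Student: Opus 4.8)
The plan is to adapt the classical analysis of \textsf{GREEDY}'s compression ratio, as carried out for the reversal setting by Fici et al.~\cite{FICI2016245}, to the bidirectional overlaps of SCS-RC. The starting point is the identity that, for any approximate solution $s = \langle s_{i_1}', \dots, s_{i_m}' \rangle$, the telescoping definition of $\langle\cdot\rangle$ gives $||S|| - |s| = \sum_{k=1}^{m-1} \op{ov}(s_{i_k}', s_{i_{k+1}}')$; that is, the compression of a solution equals the total overlap collected along its string order. In particular $||S|| - \op{OPT}(S)$ equals the total overlap along a directed path $P^{*}$ in the overlap graph that visits exactly one of $\{s_i, \bar{s_i}^R\}$ for each $i$, and since the optimal solution is the shortest of this form, $P^{*}$ has the maximum total overlap among all such paths. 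Likewise, each merge performed by \textsf{GREEDY-RC} picks an overlap-graph edge from the current tail of one chain to the current head of another (when it merges $x$ with $\bar{y}^{R}$ it picks the edge to the head of $\bar{y}^{R}$), and by the substring-free hypothesis the weight of that edge equals the overlap used; summing over the $m-1$ merges gives exactly the compression achieved by \textsf{GREEDY-RC}. It thus suffices to prove that the total overlap collected by \textsf{GREEDY-RC} is at least half the total overlap along $P^{*}$.

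\textbf{Reduction to a layered count.} I would obtain this through the combinatorial claim that, for every threshold $\theta > 0$,
\begin{equation*}
\#\{\text{edges of } P^{*} \text{ with overlap} \ge \theta\} \;\le\; 2 \cdot \#\{\text{merges of \textsf{GREEDY-RC} with overlap} \ge \theta\} .
\end{equation*}
Writing each overlap value as $\int_{0}^{\infty} \mathbf{1}[\,\text{value} \ge \theta\,]\, d\theta$ and integrating the displayed inequality over $\theta$ then gives (total overlap along $P^{*}$) $\le 2 \cdot$ (total overlap of \textsf{GREEDY-RC}), the desired bound. Two easy structural facts are used. First, the overlaps picked by \textsf{GREEDY-RC} are non-increasing along the run: right after a merge, the substring-free hypothesis forces every overlap incident to the newly formed chain or to its reverse complement to be at most the overlap just used, so the merges of overlap $\ge\theta$ form an initial segment of the run and, at the moment just after the last of them, every still-admissible pair has overlap strictly below $\theta$. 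Second, the whole configuration of \textsf{GREEDY-RC} is invariant under the involution $\iota\colon x \mapsto \bar{x}^{R}$: the current set of chains is always closed under $\iota$, and a merge picking the edge $(u,v)$ between a tail $u$ and a head $v$ is mirrored by the implicit merge picking $(\bar{v}^{R}, \bar{u}^{R})$, of the same overlap.

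\textbf{Proving the layered count; the main obstacle.} Freeze \textsf{GREEDY-RC} just after all its overlap-$\ge\theta$ merges. By the monotonicity above, every pair admissible at this moment has overlap below $\theta$, so for every edge $(a,b)$ of $P^{*}$ with overlap $\ge\theta$ the pair $(a,b)$ is inadmissible, which forces one of: $a$ is no longer the tail of any current chain; $b$ is no longer the head of any current chain; or $a$ and $b$ lie in a common current chain. One accounts for these obstructions merge by merge. The endpoint obstructions are easy to count once one notices that $P^{*}$ contains exactly one vertex of each reverse-complement pair: a single \textsf{GREEDY-RC} merge, unfolded together with its $\iota$-mirror, destroys four chain-ends — $u$ and $\bar{v}^{R}$ as tails, $v$ and $\bar{u}^{R}$ as heads — but of the pair $\{u,\bar{u}^{R}\}$ exactly one lies on $P^{*}$ (either $u$, a tail, or $\bar{u}^{R}$, a head), and likewise for $\{v,\bar{v}^{R}\}$, so the merge blocks exactly two endpoints of $P^{*}$, not four. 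This is the observation that preserves the factor two — without it the argument would only yield $\frac{1}{4}$ — and it is precisely where the reverse-complement structure forces the analysis to speak of $\iota$-orbits rather than of fixed strings. The same-chain obstructions are the delicate part: they are handled by carrying the charging argument of~\cite{FICI2016245} — which, for standard \textsf{GREEDY}, absorbs these obstructions into the endpoint accounting so that no merge is charged more than twice in total — through the $\iota$-orbit bookkeeping, the point being that a flip turns the tail of a chain into the head of its mirror, so the rigid head/tail labels tracked in the standard proof must be replaced throughout by labels attached to $\iota$-orbits. This re-expression of the same-chain charging is the step I expect to be the main obstacle; once it is in place, the layered count follows and hence \textsf{GREEDY-RC} achieves compression ratio $\frac{1}{2}$. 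Tightness of the constant is witnessed, as in the reversal-free case~\cite{FICI2016245}, by an explicit instance family on which \textsf{GREEDY-RC} realizes the ratio $\frac{1}{2}$ exactly.
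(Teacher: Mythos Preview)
The paper does not give its own proof of this lemma: it is stated as a cited result from \cite{FICI2016245}, with only the remark that Fici et al.\ treat reversals rather than reverse complements and that the same analysis applies by reading their $s^{R}$ as $\bar{s}^{R}$. So there is nothing in the paper to compare your argument against.

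As for your proposal itself: the overall shape is the right one and matches the classical greedy-compression analysis --- compression equals total overlap, and a threshold-by-threshold count shows the optimal path uses at most twice as many large-overlap edges as the greedy run. Your observation that a single greedy merge, together with its $\iota$-mirror, kills exactly two $P^{*}$-endpoints (not four) because $P^{*}$ contains one vertex per $\iota$-orbit is precisely the point that rescues the factor $2$ in the reverse-complement setting. However, your write-up is explicitly a plan rather than a proof: you identify the same-chain charging as ``the main obstacle'' and defer it to \cite{FICI2016245} without actually carrying it out in the $\iota$-orbit language. If you want this to stand as a proof, that step needs to be written down in full; otherwise the honest thing is exactly what the paper does --- cite \cite{FICI2016245} and note that reading $s^{R}$ as $\bar{s}^{R}$ leaves the argument unchanged.
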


The \textsf{TGREEDY-RC} algorithm, presented in \cref{alg:TGREEDY-RC}, is a refinement of \textsf{MGREEDY-RC}.
Rather than concatenating the strings in $T$ directly, it merges them by executing \textsf{GREEDY-RC}.

\subsection{Periodicity and the overlap rotation lemma} \label{sec:prelim Periodicity}
We adopt the notation from \cite{Breslaure.1997.OverlapRotationLemma, KAPLAN200513.greedy3.5n}.  
A string $x$ is called a \emph{factor} of a string $s$ if $s = x^i y$ for some positive integer $i$ and some (possibly empty) prefix $y$ of $x$.  
We denote by $\op{factor}(s)$ the shortest such string $x$, and define the \emph{period} of $s$ as $\op{period}(s) = |\op{factor}(s)|$.  
A semi-infinite string $s$ is called \emph{periodic} if $s = x s$ for some nonempty string $x$.  
In this case, we denote the shortest such $x$ by $\op{factor}(s)$ and define $\op{period}(s) = |\op{factor}(s)|$.  

Let $x$ and $y$ be strings that are either finite or periodic semi-infinite.  
We call $x$ and $y$ \emph{equivalent} if $\op{factor}(y)$ is a cyclic shift of $\op{factor}(x)$, 
i.e., there exist strings $e$ and $f$ such that $\op{factor}(x) = ef$ and $\op{factor}(y) = fe$; otherwise, we call them \emph{inequivalent}.
The following lemmas were proved in \cite{Blum.et.al} and restated in \cite{Breslaure.1997.OverlapRotationLemma}.

\begin{lemma}[\cite{Blum.et.al}] 
Let $C = s_{i_1}', s_{i_2}', \dots, s_{i_k}', s_{i_1}'$ be a cycle in $CYC(G_S)$. Then
\begin{align*}
    \op{factor}(\langle s_{i_1}', \dots, s_{i_k}' \rangle) 
        &= \op{factor}(\langle s_{i_1}', \dots, s_{i_k}', s_{i_1}' \rangle) \\
        &= \op{pref}(s_{i_1}', s_{i_2}') \dots \op{pref}(s_{i_{k-1}}', s_{i_k}') \op{pref}(s_{i_k}', s_{i_1}'), \\
    \op{period}(\langle s_{i_1}', \dots, s_{i_k}' \rangle) &= w(C). \\
    \langle s_{i_1}', \dots, s_{i_k}', s_{i_1}' \rangle &= \op{factor}(\langle s_{i_1}', \dots, s_{i_k}' \rangle) s_{i_1}'
\end{align*}
\end{lemma}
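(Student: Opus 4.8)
The plan is to reduce all four assertions to the single claim that $\op{period}(\langle x_1,\dots,x_k\rangle) = w(C)$, where I write $x_j := s_{i_j}'$ and set $p := \op{pref}(x_1,x_2)\cdots\op{pref}(x_{k-1},x_k)\op{pref}(x_k,x_1)$, so that $|p| = w(C)$. By the definition of $\langle\cdot\rangle$ we have $\langle x_1,\dots,x_k,x_1\rangle = p\,x_1$, and since $x_k$ is a prefix of $\langle x_k,x_1\rangle = \op{pref}(x_k,x_1)\,x_1$, the string $\langle x_1,\dots,x_k\rangle$ is a prefix of $p\,x_1$. Moreover $x_1$ occurs in $p\,x_1$ both as a prefix and, since $|p\,x_1| = |p| + |x_1|$, as the suffix beginning at position $|p|$; comparing these two occurrences character by character shows that $p\,x_1$ has period $|p|$, so $p\,x_1$ — and therefore its prefix $\langle x_1,\dots,x_k\rangle$ — is a prefix of $p^{\infty}$. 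Hence $\op{factor}$ of either string is a prefix of $p$ and both periods are at most $|p| = w(C)$; and once we know $\op{period}(\langle x_1,\dots,x_k\rangle) = w(C)$, it follows that $\op{factor}(\langle x_1,\dots,x_k\rangle)$ is the length-$w(C)$ prefix of $p^{\infty}$, i.e.\ $p$ itself, that $p\,x_1$ likewise has period exactly $|p|$ and factor $p$, and that the period identity and the last displayed identity $\langle x_1,\dots,x_k,x_1\rangle = \op{factor}(\langle x_1,\dots,x_k\rangle)\,x_1$ hold.

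So the real work is the lower bound $\op{period}(\langle x_1,\dots,x_k\rangle) \ge w(C)$, i.e.\ that $p$ is primitive; here I would use both the substring-free assumption and the optimality of $\op{CYC}(G_S)$. Suppose for contradiction that $\pi := \op{period}(\langle x_1,\dots,x_k\rangle) < w(C)$, and let $z := \op{factor}(\langle x_1,\dots,x_k\rangle)$, so $|z| = \pi$ and $\langle x_1,\dots,x_k\rangle$ is a prefix of $z^{\infty}$. Each $x_j$ is a substring of $\langle x_1,\dots,x_k\rangle$, hence occurs in $z^{\infty}$, and substring-freeness forces the starting positions of the $x_j$ in $z^{\infty}$ to be pairwise distinct modulo $\pi$: two of them starting at congruent positions would both be prefixes of $z^{\infty}$ read from that position, making one a substring of the other. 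I would then re-connect $x_1,\dots,x_k$ into a cycle in increasing order of their positions modulo $\pi$; for two strings $y',y$ consecutive in this order, whose positions differ by a positive amount $d$, substring-freeness applied to the pair $y',y$ gives $|y| > |y'| - d$, which certifies that the length-$(|y'|-d)$ suffix of $y'$ matches the corresponding prefix of $y$ as a genuine overlap, so $\op{dist}(y',y) \le d$; since the gaps $d$ sum to $\pi$ over one revolution, the new cycle has weight at most $\pi$. Replacing $C$ by this cycle in $\op{CYC}(G_S)$ yields a cycle cover of strictly smaller weight that still contains exactly one of $s_i,\bar{s_i}^R$ for every $i$, contradicting optimality. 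Hence $\op{period}(\langle x_1,\dots,x_k\rangle) = w(C)$, and the lemma follows from the first paragraph.

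I expect this re-routing step to be the main obstacle: the delicate point is that the suffix--prefix coincidence forced by reading two inputs off $z^{\infty}$ must be an \emph{admissible} overlap, with both the dangling prefix and the dangling suffix nonempty — which is exactly where substring-freeness re-enters — so that the re-connected cycle genuinely has weight at most $\pi$. The reverse-complement setting, on the other hand, adds nothing to this lemma: inside a single cycle of $\op{CYC}(G_S)$ every vertex is an ordinary string $s_{i_j}'$, and replacing one cycle by another on the same vertex set preserves the defining property of a cycle cover of $G_S$, so the argument is a verbatim transcription of the one for the standard SCS.
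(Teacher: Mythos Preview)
The paper does not prove this lemma; it is quoted verbatim from Blum et al.\ \cite{Blum.et.al} and used as a black box, so there is no ``paper's own proof'' to compare against. Your reconstruction is correct and is essentially the classical argument: the identity $\langle x_1,\dots,x_k,x_1\rangle = p\,x_1$ together with the observation that $x_1$ is both a prefix and a suffix of $p\,x_1$ gives the upper bound $\op{period}\le w(C)$, and the re-routing of the $x_j$ along $z^{\infty}$ in order of their residues modulo $\pi$ is precisely how optimality of the cycle cover is used for the matching lower bound. Your treatment of the delicate point --- that substring-freeness forces $|y|>|y'|-d$, so the induced suffix--prefix match is an admissible overlap with nonempty $u,w$ --- is exactly what is needed. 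Your closing remark is also correct: the reverse-complement structure plays no role inside a single cycle, so the lemma transfers verbatim from the standard SCS setting.
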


\begin{lemma}[\cite{Blum.et.al}] \label{lm:rotation of cycle are equivalent}
The strings 
$
\langle s_{i_1}', \dots, s_{i_k}' \rangle, \ 
\langle s_{i_2}', \dots, s_{i_k}', s_{i_1}' \rangle, \ 
\dots, \ 
\langle s_{i_k}', s_{i_1}', \dots, s_{i_{k-1}}' \rangle
$
are all equivalent.
\end{lemma}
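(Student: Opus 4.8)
The plan is to obtain this statement as an immediate corollary of the preceding lemma, which already identifies $\op{factor}(\langle s_{i_1}', \dots, s_{i_k}' \rangle)$ with the product $\op{pref}(s_{i_1}', s_{i_2}') \dots \op{pref}(s_{i_{k-1}}', s_{i_k}') \op{pref}(s_{i_k}', s_{i_1}')$. The crucial observation is that a cycle in $\op{CYC}(G_S)$ is a cyclic sequence of vertices, so writing it out starting from any one of its vertices describes the very same cycle; in particular, the hypothesis that $C$ is a cycle of $\op{CYC}(G_S)$ is insensitive to the choice of starting vertex. Hence, for every $j \in \{1, \dots, k\}$, the rotation $C_j = s_{i_j}', \dots, s_{i_k}', s_{i_1}', \dots, s_{i_{j-1}}', s_{i_j}'$ is again the cycle $C$, so the preceding lemma applies verbatim to $C_j$ and expresses $\op{factor}(\langle s_{i_j}', \dots, s_{i_{j-1}}' \rangle)$ as the corresponding cyclic product of prefixes.

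Concretely, I would set $p_j = \op{pref}(s_{i_j}', s_{i_{j+1}}')$ for $j = 1, \dots, k$, reading indices cyclically so that $p_k = \op{pref}(s_{i_k}', s_{i_1}')$. Applying the preceding lemma to $C_j$ then gives $\op{factor}(\langle s_{i_j}', \dots, s_{i_{j-1}}' \rangle) = p_j p_{j+1} \cdots p_k p_1 \cdots p_{j-1}$, i.e., the same cyclic product of the blocks $p_1, \dots, p_k$, merely cut at a different seam. To verify equivalence of two rotations $C_j$ and $C_{j'}$ with $j < j'$, I would take $e = p_j p_{j+1} \cdots p_{j'-1}$ and $f = p_{j'} p_{j'+1} \cdots p_k p_1 \cdots p_{j-1}$; then $\op{factor}(\langle s_{i_j}', \dots, s_{i_{j-1}}' \rangle) = ef$ while $\op{factor}(\langle s_{i_{j'}}', \dots, s_{i_{j'-1}}' \rangle) = fe$, which is exactly the definition of these two strings being equivalent. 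As $j$ and $j'$ are arbitrary, all $k$ rotations are pairwise equivalent, as claimed.

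I expect no substantive obstacle here: once the preceding lemma is in hand, the statement is a one-step reduction, and the only point that must be made explicit is the rotation-invariance of that lemma (its hypothesis and conclusion are both stated relative to a cyclic object, not a based one). If one prefers to avoid the two-block decomposition, it is enough to check equivalence between consecutive rotations $C_j$ and $C_{j+1}$ — with $e = p_j$ and $f = p_{j+1} \cdots p_k p_1 \cdots p_{j-1}$ — and then conclude by transitivity of word conjugacy; but the direct argument above sidesteps the need for that fact.
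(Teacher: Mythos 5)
Your proposal is correct and is exactly the standard argument: the paper imports this lemma from Blum et al.\ without reproving it, and the proof there proceeds precisely as you describe, by applying the preceding factorization lemma to each rotation of the (same) cycle and observing that the resulting factors $p_j \cdots p_k p_1 \cdots p_{j-1}$ are cyclic shifts of one another, which is the definition of equivalence. No gaps.
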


\begin{lemma}[\cite{Blum.et.al}] \label{lm:equivalent cycle mergable}
Let 
$C = s_{i_1}', s_{i_2}', \dots, s_{i_h}', s_{i_1}'$ 
and 
$D = s_{j_1}', s_{j_2}', \dots, s_{j_g}', s_{j_1}'$ 
be two distinct cycles in $G_S$. 
If $\langle s_{i_1}', s_{i_2}', \dots, s_{i_h}'\rangle$ is equivalent to $\langle  s_{j_1}', s_{j_2}', \dots, s_{j_g}'\rangle$
Then there exists a third cycle $E$ with weight $w(C)$ containing all the vertices in $C$ and $D$.
\end{lemma}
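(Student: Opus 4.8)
Once a cycle cover has fixed an orientation for every input string, the strings $s_{i_1}',\dots,s_{i_h}',s_{j_1}',\dots,s_{j_g}'$ are ordinary strings over $\Sigma$ and the reverse-complement structure plays no further role, so the statement is exactly the classical merging lemma for the standard SCS applied to this fixed set of strings, and I would follow the classical argument. The idea is to realize all vertices of $C$ and $D$ as factors of a single primitive-periodic string, lay them out as arcs on a circle of circumference $w(C)$, and re-thread them into one cycle by reading the arc start points in circular order.

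Concretely, I would set $r=\op{factor}(\langle s_{i_1}',\dots,s_{i_h}'\rangle)$, so that by the lemma above on the $\op{factor}$ and $\op{period}$ of a cycle we have $|r|=\op{period}(\langle s_{i_1}',\dots,s_{i_h}'\rangle)=w(C)$, $r$ is primitive, and every vertex of $C$ occurs as a factor of the semi-infinite periodic string $P=rrr\cdots$. Since $\langle s_{j_1}',\dots,s_{j_g}'\rangle$ is equivalent to $\langle s_{i_1}',\dots,s_{i_h}'\rangle$, its $\op{factor}$ is a cyclic shift of $r$ and in particular has length $w(D)=|r|=w(C)$; combining this with \cref{lm:rotation of cycle are equivalent} I may cyclically rotate the indexing of $D$ so that $\op{factor}(\langle s_{j_1}',\dots,s_{j_g}'\rangle)=r$, whence every vertex of $D$ is also a factor of $P$. (We may take $C$ and $D$ vertex-disjoint, as in our applications, dealing with the degenerate one-vertex cases separately.)

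Next I would pass to the circular picture: embed $\langle C\rangle$ and $\langle D\rangle$ into a common two-sided periodic extension of $P$ and identify positions modulo $|r|$, obtaining a circle $Z$ of circumference $w(C)$ on which each vertex $v\in V(C)\cup V(D)$ occupies an arc with a well-defined start point $\sigma(v)\in Z$. Substring-freeness of $S\cup\bar{S}^R$ makes the $h+g$ start points pairwise distinct, since two distinct vertices sharing a start point would have the shorter one a prefix of the longer. The key bookkeeping fact is that stepping from $s_{i_a}'$ to $s_{i_{a+1}}'$ along $C$ advances the start point by exactly $\op{dist}(s_{i_a}',s_{i_{a+1}}')\in[1,|s_{i_a}'|]$, and these advances sum to $w(C)=|r|$, i.e.\ to one full turn of $Z$; hence the cyclic order of $V(C)$ read off by increasing $\sigma$ is precisely $s_{i_1}',\dots,s_{i_h}'$, and similarly for $D$.

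I would then let $E$ be the cycle on all $h+g$ vertices in the circular order of their $\sigma$-values. By the previous paragraph the order $E$ induces on $V(C)$ (resp.\ $V(D)$) is the original one, so $E$ is a single cycle through every vertex of $C$ and $D$. For the weight, let $x,y$ be consecutive in $E$ with circular gap $\delta$ (the $\delta$'s summing to $|r|$): because $x$'s successor in its home cycle already lies within circular distance $|x|$ of $x$ we get $\delta\le|x|$, and when $\delta<|x|$ substring-freeness gives $|y|\ge|x|-\delta$, so the length-$(|x|-\delta)$ suffix of $x$ coincides with the prefix of $y$ of that length and hence $\op{dist}(x,y)\le\delta$; summing yields $w(E)\le|r|=w(C)$. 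The only remaining step — and the main obstacle — is to upgrade this to equality, i.e.\ to show $\op{ov}(x,y)=|x|-\delta$ exactly for each consecutive pair (equivalently $\op{ov}(x,y)=0$ when $\delta=|x|$). A strictly larger overlap would exhibit a common string occurring at two positions of $P$ at distance $t\in(0,\op{ov}(x,y))$, i.e.\ a factor of $P$ of period $t$; a Fine--Wilf argument together with the primitivity of $r$ rules this out, since it would force $P$, and hence $r$, to have a period dividing $\gcd(t,|r|)<|r|$. This periodicity argument is precisely the technical heart of the classical proof; everything else is routine circle bookkeeping.
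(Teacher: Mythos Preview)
The paper does not give its own proof of this lemma; it is quoted verbatim from Blum et al.\ and used only as a black box to derive \cref{lm:cycles are inequivalent}. Your plan is precisely the classical Blum--Jiang--Li--Tromp--Yannakakis argument: realise every vertex of the two equivalent cycles as an arc on the circle of circumference $|r|=w(C)$, observe that substring-freeness makes the start points distinct, and re-thread all $h+g$ vertices into a single cycle in circular order. The bookkeeping you outline (that each home cycle already uses up one full turn, hence $\delta\le|x|$ for consecutive $x,y$ in $E$, hence $\op{dist}(x,y)\le\delta$ and $w(E)\le|r|$) is exactly right.

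One small remark on the final step. Your Fine--Wilf sketch for the \emph{equality} $w(E)=w(C)$ is correct in spirit but a little loose as written: the relevant window carrying the spurious period $t$ has length $\ell+t$, not $\ell$, and you need $\ell+t\ge t+|r|-\gcd(t,|r|)$ to invoke Fine--Wilf, which boils down to $\ell\ge|r|-1$ (together with primitivity of $r$ this forces $t\equiv 0\pmod{|r|}$, a contradiction). For the paper's sole application---contradicting optimality of the cycle cover---the inequality $w(E)\le w(C)$ you already established cleanly is in any case sufficient, since replacing $C$ and $D$ by $E$ strictly decreases the total weight.
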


\cref{lm:equivalent cycle mergable} implies that strings taken from distinct cycles in an optimal cycle cover are inequivalent, and this property naturally extends to the reverse-complement setting.

\begin{lemma} \label{lm:cycles are inequivalent}
Let 
$C = s_{i_1}', s_{i_2}', \dots, s_{i_h}', s_{i_1}'$ 
and 
$D = s_{j_1}', s_{j_2}', \dots, s_{j_g}', s_{j_1}'$ 
be two distinct cycles in $CYC(G_S)$. 
Then the strings 
$e = \langle s_{i_1}', \dots, s_{i_h}' \rangle$ and
$f = \langle s_{j_1}', \dots, s_{j_g}' \rangle$
are inequivalent. 
Moreover, the pairs $(e, \bar{f}^R)$, $(\bar{e}^R, f)$, and $(\bar{e}^R, \bar{f}^R)$ are also inequivalent.
\end{lemma}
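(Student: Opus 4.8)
The plan is to reduce all four inequivalence claims to a single weight-decreasing argument based on \cref{lm:equivalent cycle mergable} and the optimality of $CYC(G_S)$. The mechanism is the following: suppose $A$ and $B$ are distinct, vertex-disjoint cycles of $G_S$ whose vertices, taken together, carry exactly the underlying indices of two cycles $C,D$ of $CYC(G_S)$, with $w(A)=w(C)$; if the path-strings of $A$ and $B$ are equivalent, then \cref{lm:equivalent cycle mergable} produces a single cycle $E$ containing all vertices of $A$ and $B$ with $w(E)=w(A)=w(C)$, and $\bigl(CYC(G_S)\setminus\{C,D\}\bigr)\cup\{E\}$ is again a cycle cover — each $1\le k\le m$ is still covered exactly once, with exactly one of $s_k,\bar{s_k}^R$ — of total weight $w(CYC(G_S))-w(D)$. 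Because $S\cup\bar{S}^R$ is substring-free, every cycle of $G_S$ has strictly positive weight, so $w(D)>0$ and this contradicts the optimality of $CYC(G_S)$. Taking $A=C$ and $B=D$, whose path-strings are exactly $e$ and $f$, settles the first claim; this is the standard argument.

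For the ``moreover'' part I would first set up the reverse-complement dictionary. Using the elementary identities $|\bar{z}^R|=|z|$ and $\op{ov}(\bar{y}^R,\bar{x}^R)=\op{ov}(x,y)$, the map $x\mapsto\bar{x}^R$ turns the cycle $C=s_{i_1}',\dots,s_{i_h}',s_{i_1}'$, after reversing the cyclic order, into a cycle $C^{rc}$ of $G_S$ on the vertices $\bar{s_{i_1}'}^R,\dots,\bar{s_{i_h}'}^R$ — the opposite orientations of the vertices of $C$ — carrying the same underlying indices, with $w(C^{rc})=w(C)$ (either by reindexing the cyclic telescoping sum defining the weight, or because $w(C)=\op{period}(e)=\op{period}(\bar{e}^R)$), and whose path-string from a suitable starting vertex is exactly $\bar{e}^R$; in particular $\bar{e}^R$ is equivalent to the path-string of $C^{rc}$ by \cref{lm:rotation of cycle are equivalent}. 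Defining $D^{rc}$ analogously makes $\bar{f}^R$ the path-string of $D^{rc}$. Now I apply the mechanism above with $(A,B)$ equal to $(C,D^{rc})$, $(C^{rc},D)$, and $(C^{rc},D^{rc})$ in turn: in each case $A$ and $B$ are distinct and vertex-disjoint — because $C$ and $D$ are vertex-disjoint in $CYC(G_S)$, so they have disjoint underlying index sets, while $C^{rc}$ and $D^{rc}$ carry precisely those sets and the opposite-orientation vertices of $C$ and $D$ never occur in $CYC(G_S)$ — their combined set of underlying indices is that of $C$ and $D$, and $w(A)=w(C)$. Hence equivalence of $e$ with $\bar{f}^R$, of $\bar{e}^R$ with $f$, or of $\bar{e}^R$ with $\bar{f}^R$ would each contradict optimality, which proves the remaining three cases.

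The only genuinely new ingredient relative to the standard SCS proof is this identification of $\bar{e}^R$ (resp. $\bar{f}^R$) with the path-string of a reverse-complemented cycle $C^{rc}$ (resp. $D^{rc}$), together with the invariances $w(C^{rc})=w(C)$, $\op{ov}(\bar{y}^R,\bar{x}^R)=\op{ov}(x,y)$, and $\op{period}(\bar{z}^R)=\op{period}(z)$; getting this dictionary stated cleanly is where I expect the (still routine) work to lie, since it is exactly the point at which reverse complements behave differently from plain reversals. Everything else is bookkeeping that transfers from the classical case: that the modified set is a genuine cycle cover, that the cycles fed to \cref{lm:equivalent cycle mergable} are distinct and vertex-disjoint, and that substring-freeness forces $w(D)>0$ and hence the strict decrease.
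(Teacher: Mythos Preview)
Your proposal is correct and is precisely the argument the paper has in mind: the paper does not give a separate proof but simply remarks that \cref{lm:equivalent cycle mergable} ``implies that strings taken from distinct cycles in an optimal cycle cover are inequivalent, and this property naturally extends to the reverse-complement setting,'' and your write-up supplies exactly those details --- the reverse-complement cycle $C^{rc}$ with $w(C^{rc})=w(C)$ and path-string $\bar{e}^R$, and the weight-decreasing replacement $(CYC(G_S)\setminus\{C,D\})\cup\{E\}$. One minor simplification: once you have $(e,f)$ and $(e,\bar f^R)$ inequivalent, the remaining two cases follow immediately since $\op{factor}(\bar z^R)=\overline{\op{factor}(z)}^R$ makes equivalence invariant under simultaneous reverse-complementation.
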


The following lemma was used to gain a 4-approximate bound in \cite{JIANG1992195}.

\begin{lemma}[\cite{Blum.et.al}] \label{lm:overlap weak lemma}
If strings $x$ and $y$ are inequivalent, then $\op{ov}(x, y) \le \op{period}(x) + \op{period}(y)$.
\end{lemma}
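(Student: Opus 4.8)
The plan is to argue by contradiction. Write $p=\op{period}(x)$ and $q=\op{period}(y)$, and suppose $\op{ov}(x,y)>p+q$. Let $v$ be the overlap between $x$ and $y$, so that $v$ is a suffix of $x$, a prefix of $y$, and $|v|>p+q$. The goal is to show that this forces $\op{factor}(x)$ and $\op{factor}(y)$ to be cyclic shifts of one another, i.e.\ that $x$ and $y$ are equivalent, contradicting the hypothesis.

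The first observation is that $v$, being a contiguous substring of $x$ and of $y$, inherits both periods: $v$ has period $p$ and period $q$. Since $|v|>p+q\ge p+q-\gcd(p,q)$, the Fine--Wilf periodicity lemma gives that $v$ has period $d:=\gcd(p,q)$. Next I would propagate this small period back to the ambient strings, via the elementary fact that if a string has period $p$ and contains a factor of length at least $p$ whose period $d$ divides $p$, then the whole string has period $d$ (shift any position by a multiple of $p$ into the factor, then move within the factor by multiples of $d$). Applying this to $x$ with the factor $v$ (using $|v|>p$ and $d\mid p$), and symmetrically to $y$, shows that both $x$ and $y$ have period $d$. Because $\op{period}(\cdot)$ is the least period (which follows from the definition of $\op{factor}$), we get $p\le d$ and $q\le d$; combined with $d=\gcd(p,q)\le\min(p,q)$ this forces $p=q=d$.

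Now $u:=\op{factor}(x)$ and $w:=\op{factor}(y)$ both have length $p$, and $x$ (resp.\ $y$) is a prefix of $u^{\infty}$ (resp.\ $w^{\infty}$). Since $v$ is a prefix of $y$ of length greater than $p$, the length-$p$ prefix of $v$ equals $w$; since $v$ is a suffix of $x$ and $x$ is a prefix of $u^{\infty}$, that same length-$p$ prefix of $v$ is the cyclic shift of $u$ by $(|x|-|v|)\bmod p$ positions. Hence $w$ is a cyclic shift of $u$, so $\op{factor}(y)$ is a cyclic shift of $\op{factor}(x)$, and $x,y$ are equivalent — the desired contradiction.

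I expect the main obstacle to be purely a matter of careful bookkeeping rather than any conceptual difficulty: in the propagation step one must verify that the shifted positions genuinely land inside the factor $v$ (which is precisely where the hypothesis $|v|>p+q$, and hence $|v|>p$, is used), and in the last step one must track the rotation offset correctly. The only nontrivial external input is the Fine--Wilf lemma, which I would invoke as a black box.
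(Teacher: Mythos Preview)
Your argument is correct and is the standard proof of this classical fact: the overlap $v$ inherits both periods, Fine--Wilf forces the common period $\gcd(p,q)$ on $v$, this propagates to $x$ and $y$ (your shifting argument is fine once one checks $|v|-d>p$ and $|v|-d>q$, which follow from $|v|>p+q$ and $d\le\min(p,q)$), minimality of $\op{period}$ then gives $p=q=d$, and reading off the length-$p$ prefix of $v$ identifies $\op{factor}(y)$ as a cyclic shift of $\op{factor}(x)$.

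As for comparison with the paper: there is nothing to compare. The paper does not prove this lemma at all; it merely quotes it from Blum et~al.\ \cite{Blum.et.al} (and the restatement in \cite{Breslaure.1997.OverlapRotationLemma}) as a known tool used in the $4$-approximation analysis. Your write-up is exactly the argument one finds in those sources, so it is entirely appropriate here.
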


Given a semi-infinite string $\alpha = x_1x_2 \cdots$, we denote the rotation $\alpha[k] = x_kx_{k+1} \cdots$.
Breslauer et al.\ \cite{Breslaure.1997.OverlapRotationLemma} proved the following overlap rotation lemma.

\begin{lemma}[overlap rotation lemma \cite{Breslaure.1997.OverlapRotationLemma}] \label{lm:overlap rotation lemma}
Let $\alpha$ be a periodic semi-infinite string. There exists an integer $k$, such that for any finite string $s$ that is inequivalent to $\alpha$,
\[
\op{ov}(s, \alpha[k]) \le \op{period}(s) + \frac{1}{2}\op{period}(\alpha)
\]
\end{lemma}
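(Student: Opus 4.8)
The plan is to reduce the statement to a combinatorial fact about the primitive word underlying $\alpha$ and then to read the rotation $k$ off from that fact. Write $u = \op{factor}(\alpha)$, a primitive word of length $p := \op{period}(\alpha)$, so that $\alpha = u^{\infty}$; for every offset $j$ the rotation $\alpha[j]$ is again periodic of period $p$, and it begins with the length-$\ell$ cyclic factor of $u$ that starts at position $j \bmod p$. After disposing of the trivial cases $p \le 2$ directly (using that $\op{factor}(s)$ is primitive together with \cref{lm:overlap weak lemma}), assume $p \ge 3$ and set $L := \lfloor p/2 \rfloor + 1$. I would choose $k$ to be the starting position of a length-$L$ cyclic factor of $u$ that occurs exactly once in $u$ (cyclically). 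The point of this choice is that a long suffix of $s$ which is simultaneously a prefix of $\alpha[k]$ ``pins down'' position $k$ twice, unless $\op{period}(s)$ is a multiple of $p$.

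First I would argue that such a $k$ exists. The number of distinct length-$L$ cyclic factors of $u$ equals the subword complexity $C(L)$ of the purely periodic word $u^{\infty}$; since $u$ is primitive, $u^{\infty}$ has minimal period $p > L$, so by the Morse--Hedlund theorem $C(L) \ge L + 1 > p/2$. The $p$ cyclic positions of $u$ realize these $C(L)$ distinct factors, and if every factor were realized at least twice we would need $\ge 2C(L) > p$ positions; hence some length-$L$ cyclic factor occurs exactly once, and I take $k$ to be its unique starting position.

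Next I would establish the bound for this $k$. Suppose, for contradiction, that $\op{ov}(s, \alpha[k]) > \op{period}(s) + \frac{p}{2}$ for some finite $s$ inequivalent to $\alpha$; put $q = \op{period}(s)$ and let $v$ be the suffix of $s$ of length $\op{ov}(s, \alpha[k])$, which is also a prefix of $\alpha[k]$. Then $|v| \ge q + L > q$, so $v$ has period $q$ and therefore $v[1..L] = v[q+1..q+L]$. The left-hand side is the length-$L$ prefix of $\alpha[k]$, namely the length-$L$ cyclic factor $W$ of $u$ at position $k$; the right-hand side is the length-$L$ cyclic factor of $u$ at position $k + q$. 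Their equality means $W$ occurs at both $k$ and $k + q$, so by uniqueness of $W$ we get $p \mid q$. But then $q \ge p$ and $|v| \ge q + L > p$, so $v$ has period $p$ as well; consequently any length-$q$ window of $v$ (which, since $|v| > q$, is a cyclic rotation of the primitive word $\op{factor}(s)$) carries period $p$ with $p \mid q$, forcing $\op{factor}(s)$ either to be a proper power (impossible) or, in the boundary case $q = p$, to be a cyclic rotation of $u = \op{factor}(\alpha)$, i.e.\ $s$ equivalent to $\alpha$ --- a contradiction. This proves $\op{ov}(s, \alpha[k]) \le \op{period}(s) + \frac{p}{2}$, as desired.

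The step I expect to be the main obstacle is the existence of the good rotation $k$: a naive union bound over the nontrivial cyclic shifts of $u$ fails, since a single shift can by itself block up to roughly $p/2$ candidate starting positions, so the argument genuinely needs the global input that a primitive length-$p$ word already has at least $\lfloor p/2 \rfloor + 2$ distinct factors of length $\lfloor p/2 \rfloor + 1$ (the periodic-word case of Morse--Hedlund). The only other delicate point is the degenerate case $p \mid \op{period}(s)$ (and the small cases $p \le 2$), which is exactly where primitivity of $\op{factor}(s)$ and \cref{lm:overlap weak lemma} are invoked to rule out the remaining configurations.
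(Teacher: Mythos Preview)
The paper does not give its own proof of this lemma; it is quoted verbatim from Breslauer, Jiang, and Jiang (1997) and used as a black box. So there is nothing in the paper to compare your argument against.

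That said, your proof is correct and essentially reconstructs the original argument. The two ingredients are exactly the right ones: (i) for a primitive $u$ of length $p$ and any $L<p$, the periodic word $u^{\infty}$ has at least $L+1$ distinct length-$L$ factors (the Morse--Hedlund bound applied to the fact that the minimal period exceeds $L$), so among the $p$ cyclic starting positions some length-$L$ factor with $L=\lfloor p/2\rfloor+1$ is realized exactly once; and (ii) if the overlap exceeded $\op{period}(s)+p/2$, the period-$q$ structure of the overlap would place that unique factor at two cyclically distinct positions, forcing $p\mid q$ and then, via primitivity of $\op{factor}(s)$, forcing $s$ to be equivalent to $\alpha$. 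Your handling of the boundary cases $p\le 2$ is fine (and in fact the same uniqueness-and-divisibility argument goes through there too, since for $L\ge p$ every length-$L$ cyclic factor of a primitive $u$ occurs at a unique position modulo $p$). The one place to be explicit is that the overlap $v$, being a prefix of $\alpha[k]$, inherits period $p$, so any length-$q$ window of $v$ is simultaneously a rotation of $\op{factor}(s)$ and a $p$-periodic word, which is what yields the contradiction when $p\mid q$ with $q>p$; you state this, but it is the step a careful reader will want spelled out.
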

We denote a rotation $\alpha[k]$ that satisfies \cref{lm:overlap rotation lemma} as the \emph{critical rotation}.  
The following lemma, originally proved in \cite{Breslaure.1997.OverlapRotationLemma} and restated in \cite{KAPLAN200513.greedy3.5n}, shows that it is possible to extract such a rotation from each cycle in $CYC(G_S)$.
Property (4) relies on the fact that $x_D$ and $x_C$ are inequivalent, which follows from Property (3) together with \cref{lm:rotation of cycle are equivalent,lm:cycles are inequivalent}.

\begin{lemma}[\cite{Breslaure.1997.OverlapRotationLemma}] \label{lm:extract critical rotation from cycle}
Let $C = s_{i_1}', \dots, s_{i_r}', s_{i_1}'$ be a cycle in $CYC(G_S)$. Then there exist a string $x_C$ and an index $j$ such that:
\begin{bracketenumerate}
    \item The string $\langle s_{i_{j+1}}', \dots, s_{i_r}', s_{i_1}', \dots, s_{i_j}' \rangle$ is a suffix of $x_C$.
    \item The string $x_C$ is contained in $y_C = \langle s_{i_j}', \dots, s_{i_r}', s_{i_1}', \dots, s_{i_j}' \rangle$.
    \item $x_C$ is equivalent to $\langle s_{i_{j+1}}', \dots, s_{i_r}', s_{i_1}', \dots, s_{i_j}' \rangle$.
    \item The semi-infinite string $\op{factor}(x_C)^\infty$ is the critical rotation of $\op{factor}(\langle s_{i_1}', \dots, s_{i_r}' \rangle)^\infty$.  
    Specifically, let $x_{D}$ be the string obtained by this lemma corresponding to a different cycle $D \in CYC(G_S)$; then  
    $\op{ov}(x_{D}, x_C) \le w(D) + \frac{1}{2}w(C)$.
\end{bracketenumerate}
\end{lemma}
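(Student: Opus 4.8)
The plan is to mimic the classical construction of Breslauer et al.\ \cite{Breslaure.1997.OverlapRotationLemma}: read the cycle $C$ as an ordinary periodic string, locate its critical rotation, and chop out a finite window that begins exactly at the critical offset.

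First I would fix notation. Writing $d_\ell = \op{dist}(s_{i_\ell}', s_{i_{\ell+1}}')$ and $D_\ell = d_1 + \cdots + d_\ell$, so that $D_0 = 0$ and $D_r = w(C)$, the structural lemma of \cite{Blum.et.al} recalled above gives $\op{period}(\langle s_{i_1}', \dots, s_{i_r}' \rangle) = w(C)$ and $\pi := \op{factor}(\langle s_{i_1}', \dots, s_{i_r}' \rangle) = \op{pref}(s_{i_1}', s_{i_2}') \cdots \op{pref}(s_{i_r}', s_{i_1}')$. Let $\beta := \pi^\infty$ be the semi-infinite string obtained by walking around $C$ forever; then $\beta$ is periodic with minimal period $w(C)$, each $s_{i_\ell}'$ occurs in $\beta$ beginning at every position congruent to $D_{\ell-1}$ modulo $w(C)$, and the blocks $\op{pref}(s_{i_\ell}', s_{i_{\ell+1}}')$ tile $[0, w(C))$. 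Applying \cref{lm:overlap rotation lemma} to $\alpha := \beta$ then supplies an offset $k$, which I may take in $[0, w(C))$, with $\op{ov}(s, \beta[k]) \le \op{period}(s) + \frac{1}{2} w(C)$ for every finite string $s$ inequivalent to $\beta$.

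Next I would carry out the construction. Let $j$ be the unique index with $D_{j-1} \le k < D_j$, so position $k$ lies inside the block $\op{pref}(s_{i_j}', s_{i_{j+1}}')$, and write $p := \langle s_{i_{j+1}}', \dots, s_{i_r}', s_{i_1}', \dots, s_{i_j}' \rangle$. In $\beta$ the string $p$ occurs as the window beginning at position $D_j$; I define $x_C$ to be the substring of $\beta$ that begins at position $k$ and ends where this occurrence of $p$ ends. Then $y_C = \langle s_{i_j}', \dots, s_{i_r}', s_{i_1}', \dots, s_{i_j}' \rangle$ occurs in $\beta$ beginning at position $D_{j-1} \le k$ and ending at the same point, so $x_C$ is a suffix of $y_C$, which is (2); and $p$ is a suffix of $x_C$, which is (1). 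Since $k < D_j$, the window $x_C$ has length strictly greater than $w(C)$; combined with the minimality of $\op{period}(\langle s_{i_1}', \dots, s_{i_r}' \rangle) = w(C)$, a Fine--Wilf argument shows that $x_C$ has minimal period exactly $w(C)$, so $\op{factor}(x_C)$ is the length-$w(C)$ prefix of $\beta[k]$, a cyclic shift of $\pi$. Because $\op{factor}(p)$ is also a cyclic shift of $\pi$, the strings $x_C$ and $p$ are equivalent, which is (3), and $\op{factor}(x_C)^\infty = \beta[k]$ is the critical rotation of $\op{factor}(\langle s_{i_1}', \dots, s_{i_r}' \rangle)^\infty$, giving the first half of (4).

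For the quantitative bound in (4), take a different cycle $D$ in the optimal cycle cover with corresponding string $x_D$. By (3) applied to $D$ together with \cref{lm:rotation of cycle are equivalent}, $x_D$ is equivalent to $\langle s_{j_1}', \dots, s_{j_g}' \rangle$, which by \cref{lm:cycles are inequivalent} is inequivalent to $\langle s_{i_1}', \dots, s_{i_r}' \rangle$ and hence to $\beta$, since equivalence is an equivalence relation; moreover $\op{period}(x_D) = |\op{factor}(x_D)| = w(D)$, again by (3) and the structural lemma. So \cref{lm:overlap rotation lemma} applies with $s = x_D$ and yields $\op{ov}(x_D, \beta[k]) \le w(D) + \frac{1}{2} w(C)$; since $x_C$ is a prefix of $\beta[k]$, any string that is simultaneously a suffix of $x_D$ and a prefix of $x_C$ is a prefix of $\beta[k]$, whence $\op{ov}(x_D, x_C) \le \op{ov}(x_D, \beta[k]) \le w(D) + \frac{1}{2} w(C)$. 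The hardest part will be the window bookkeeping: choosing one index $j$ that simultaneously matches the critical offset $k$ and realizes (1)--(3), and confirming via Fine--Wilf that $x_C$ does not inherit a shorter period. I would emphasize that, unlike the main difficulty of this paper, this lemma transfers from the classical setting essentially without change, because periodicity, factors, and \cref{lm:overlap rotation lemma} are all statements about plain strings; the sole reverse-complement-specific ingredient is \cref{lm:cycles are inequivalent}, used above to obtain the inequivalence of $x_D$ and $\beta$.
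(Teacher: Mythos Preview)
The paper does not give its own proof of this lemma; it is quoted from \cite{Breslaure.1997.OverlapRotationLemma} (restated in \cite{KAPLAN200513.greedy3.5n}), with only the remark preceding the statement that property~(4) uses property~(3) together with \cref{lm:rotation of cycle are equivalent,lm:cycles are inequivalent}. Your construction---locate the critical offset $k$ inside one period of $\beta$, pick $j$ so that $k$ lands in the block $\op{pref}(s_{i_j}',s_{i_{j+1}}')$, and take $x_C$ to be the window of $\beta$ from position $k$ to the end of $y_C$---is exactly the classical one, and your derivations of (1), (2), (4) and the observation that \cref{lm:cycles are inequivalent} is the only reverse-complement-specific input match what the paper records.

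One small wrinkle in your argument for (3): the clause ``since $|x_C|>w(C)$ \dots\ a Fine--Wilf argument shows that $x_C$ has minimal period exactly $w(C)$'' does not quite close as written, because $|x_C|>w(C)$ by itself is not enough to meet the Fine--Wilf length hypothesis $|x_C|\ge q+w(C)-\gcd(q,w(C))$ for an arbitrary candidate period $q<w(C)$. The fix is already implicit in your setup and needs no Fine--Wilf at all: the suffix $p=\langle s_{i_{j+1}}',\dots,s_{i_j}'\rangle$ of $x_C$ has minimal period $w(C)$ by the structural lemma of \cite{Blum.et.al} applied to the rotated cycle (via \cref{lm:rotation of cycle are equivalent}), so any period of $x_C$ is also a period of its substring $p$ and hence is at least $w(C)$, while $w(C)$ is visibly a period of $x_C$ since $x_C$ is a factor of $\beta$.
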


\section{Analysis}
We begin by proving \cref{thm:framework MGREEDY-RC to GREEDY-RC},
which serves as the reverse-complement analogue of Theorem~3.1 in \cite{STOC22.ImprovedApproximateGuarantees}.
The argument follows the same ideas as the original proof.

\subsection{Proof of Theorem \ref{thm:framework MGREEDY-RC to GREEDY-RC}}
In \textsf{MGREEDY-RC}, we first compute the optimal cycle cover $CYC(G_S)$ and obtain the set of strings corresponding to its cycles, denoted by $T$ (see \cref{alg:MGREEDY-RC}).
We then observe that the optimal solution to the SCS-RC problem on this set $T$ is at most twice as long as the optimal solution to the original instance.

\begin{lemma} \label{lm:OPT(T) <= 2OPT(S)}
$\op{OPT}(T) \le \op{OPT}(S) + w(CYC(G_S)) \le 2\,\op{OPT}(S)$.
\end{lemma}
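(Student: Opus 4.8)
The plan is to mimic the classical argument for the standard SCS problem (Blum et al.~\cite{Blum.et.al}, restated in the framework of~\cite{STOC22.ImprovedApproximateGuarantees}), while carefully accounting for the orientation mismatch introduced by reverse complements. Fix an optimal SCS-RC solution $\sigma^*$ for $S$, so $|\sigma^*| = \op{OPT}(S)$, and write $CYC(G_S) = \{C_1, \dots, C_p\}$ with corresponding strings $t_1, \dots, t_p \in T$, where $t_i = \langle v_1^i, \dots, v_{k_i}^i \rangle$ for the cyclic vertex order of $C_i$ (closing edge $v_{k_i}^i \to v_1^i$) and $\op{period}(t_i) = w(C_i)$. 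It suffices to exhibit \emph{some} SCS-RC solution for $T$ of length at most $\op{OPT}(S) + w(CYC(G_S))$; the bound $\op{OPT}(S) + w(CYC(G_S)) \le 2\,\op{OPT}(S)$ then follows immediately from \cref{eq: w <= n}.

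The first step is to attach to each cycle $C_i$ a ``thickened'' string $\hat R_i$ with the properties: (i) $\hat R_i$ contains $t_i$ or $\overline{t_i}^R$ as a substring; (ii) $\hat R_i$ has a string $u_i$ as both a prefix and a suffix, where $u_i$ occurs in $\sigma^*$; and (iii) $|\hat R_i| = w(C_i) + |v_1^i|$. Since $v_1^i$ is one orientation of an input string, $\sigma^*$ must contain $v_1^i$ or $\overline{v_1^i}^R$. In the first case, set $u_i = v_1^i$ and $\hat R_i = \langle v_1^i, v_2^i, \dots, v_{k_i}^i, v_1^i\rangle = \op{factor}(t_i)\, v_1^i$, which has $t_i$ as a prefix. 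In the second case, use the reverse-complemented cycle $\overline{C_i}^R$: traversing it once starting and ending at $\overline{v_1^i}^R$ yields $\hat R_i = \langle \overline{v_1^i}^R, \overline{v_{k_i}^i}^R, \overline{v_{k_i-1}^i}^R, \dots, \overline{v_2^i}^R, \overline{v_1^i}^R\rangle$, which has $\overline{t_i}^R = \langle \overline{v_{k_i}^i}^R, \dots, \overline{v_1^i}^R\rangle$ as a suffix and length $w(\overline{C_i}^R) + |\overline{v_1^i}^R| = w(C_i) + |v_1^i|$, because reverse-complementing a cycle preserves its total weight. Establishing this second construction --- i.e.\ that flipping an entire cycle keeps its weight and merely swaps $t_i$ for $\overline{t_i}^R$, which is equally acceptable for SCS-RC, so that for \emph{any} input orientation appearing in $\sigma^*$ a valid thickened string exists --- is the one genuinely new ingredient compared with the standard proof, and is where I expect the main (if modest) difficulty to lie; note that the overlap rotation lemma, which fails in the reverse-complement setting, is not needed for this lemma.

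The second step assembles and bounds the superstring. Order the cycles so that $c_1 \le \dots \le c_p$, where $c_i$ is the starting position of the first occurrence of $u_i$ in $\sigma^*$, and consider $\langle \hat R_1, \dots, \hat R_p\rangle$; since it contains every $\hat R_i$, hence every $t_i$ or $\overline{t_i}^R$, we get $\op{OPT}(T) \le |\langle \hat R_1, \dots, \hat R_p\rangle| = |\hat R_p| + \sum_{i=1}^{p-1}\op{dist}(\hat R_i, \hat R_{i+1})$. For each $i < p$, writing $\hat R_i = \delta_i u_i$ (so $|\delta_i| = w(C_i)$) and $\hat R_{i+1} = u_{i+1}\gamma_{i+1}$, the string $\delta_i \cdot \sigma^*[c_i, c_{i+1} + |u_{i+1}|) \cdot \gamma_{i+1}$ has $\hat R_i$ as a prefix and $\hat R_{i+1}$ as a suffix --- here substring-freeness of $S \cup \overline{S}^R$ is used to rule out one anchor occurrence being nested inside another, so the window genuinely starts with $u_i$ and ends with $u_{i+1}$ --- whence $\op{dist}(\hat R_i, \hat R_{i+1}) \le w(C_i) + (c_{i+1} - c_i)$. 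Summing telescopes to $|\hat R_p| + (c_p - c_1) + \sum_{i=1}^{p-1} w(C_i)$, and combining with $|\hat R_p| = w(C_p) + |v_1^p| = w(C_p) + |u_p| \le w(C_p) + (\op{OPT}(S) - c_p)$ (since $u_p$ occurs at position $c_p$ in $\sigma^*$) gives $\op{OPT}(T) \le \op{OPT}(S) + \sum_{i=1}^{p} w(C_i) - c_1 \le \op{OPT}(S) + w(CYC(G_S))$. Finally $\op{OPT}(S) + w(CYC(G_S)) \le 2\,\op{OPT}(S)$ by \cref{eq: w <= n}, which completes the proof.
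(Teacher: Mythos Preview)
Your proof is correct and rests on the same key construction as the paper's: for each cycle $C_i$ you build a ``thickened'' string of length $w(C_i)+|v_1^i|$ that contains $t_i$ (or $\overline{t_i}^R$) and has the anchor $u_i\in\{v_1^i,\overline{v_1^i}^R\}$ as both prefix and suffix, handling the reverse-complement case by flipping the entire cycle. The only difference is bookkeeping in the assembly step: the paper takes an optimal SCS-RC solution for the set of anchors and performs \emph{in-place replacement} of each anchor occurrence by its thickened string (which is well-defined precisely because the anchor is both a prefix and a suffix), immediately gaining length $\sum_i w(C_i)$; you instead order the anchors by their first-occurrence positions in $\sigma^*$ and bound $|\langle \hat R_1,\dots,\hat R_p\rangle|$ by a telescoping distance argument. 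Both are standard ways to cash in the same structural fact, and both yield the identical bound $\op{OPT}(S)+w(CYC(G_S))$; the paper's replacement argument is just a bit more direct.
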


\begin{proof}
For each cycle $C = s_{i_1}', \dots, s_{i_r}', s_{i_1}'$ in $CYC(G_S)$, we select a single representative string $s_{i_1}'$.  
Let $S' \subseteq S$ denote the set of these representatives, one per cycle.  
Let $u$ be an optimal solution to the SCS-RC problem on $S'$.  
Clearly, $|u| \le \op{OPT}(S)$.  

For each string $s_{i_1}' \in S'$, either $s_{i_1}'$ or its reverse complement $\bar{s_{i_1}'}^R$ occurs in $u$.  
If $s_{i_1}'$ occurs, we replace one occurrence of it with
\[
e = \op{pref}(s_{i_1}', s_{i_2}') \dots \op{pref}(s_{i_{r-1}}', s_{i_r}') \op{pref}(s_{i_r}', s_{i_1}') s_{i_1}'.
\]
Otherwise, we replace $\bar{s_{i_1}'}^R$ with $\bar{e}^R$.  
Note that $s_{i_1}'$ is both the prefix and the suffix of $e$, and similarly $\bar{s_{i_1}'}^R$ is both the prefix and the suffix of $\bar{e}^R$.
This ensures that the replacement operation is well-defined and produces a valid superstring.

This replacement increases the length of $u$ by $w(CYC(G_S))$, yielding an approximate solution for the SCS-RC instance $T$.
\end{proof}

In practice, we do not know the optimal solution for the instance $T$. 
Instead, we can apply the given $\delta$-approximation algorithm in terms of the compression ratio. 
Let $\op{ALG}(T)$ denote the length of the solution produced by this $\delta$-approximation algorithm on the set $T$. 
By the definition of the compression ratio, we have
\[
\delta \,(||T|| - \op{OPT}(T)) \le ||T|| - \op{ALG}(T),
\]
where $||T|| = \sum_{x \in T} |x|$ denotes the total length of strings in $T$.
Note that $||T||$ is also the length of the superstring produced by \textsf{MGREEDY-RC}.  
Hence, the $(2 + \alpha)$-approximation guarantee of \textsf{MGREEDY} implies
\[
||T|| \le (2 + \alpha)\op{OPT}(S).
\]
Combining these inequalities, we obtain
\begin{align*}
 \op{ALG}(T) 
 &\le (1 - \delta)||T|| + \delta\,\op{OPT}(T) \\
 &\le (1 - \delta)(2 + \alpha)\op{OPT}(S) + 2\delta\,\op{OPT}(S) \\
 &= (2 + (1 - \delta)\alpha)\op{OPT}(S).
\end{align*}
\textsf{TGREEDY-RC} corresponds to the case where we run \textsf{GREEDY-RC} on the set $T$. 
From \cref{lm:GREEDY-RC 0.5 compression}, \textsf{GREEDY-RC} is a $\tfrac{1}{2}$-approximation algorithm in terms of compression ratio, 
that is, $\delta = \tfrac{1}{2}$.

\subsection{Proof of Theorem \ref{thm:MGREEDY-RC 3.75-approx}}
For each cycle $C = s_{i_1}', s_{i_2}', \dots, s_{i_r}', s_{i_1}'$ in $CYC(G_S)$, the existence of strings $x_C$ and $y_C$ satisfying the properties stated in \cref{lm:extract critical rotation from cycle} is guaranteed. We define
\[
A = \{x_C \mid C \in CYC(G_S)\}, \qquad ||A|| = \sum_{x \in A} |x|.
\]
By property (1) of \cref{lm:extract critical rotation from cycle}, $||A||$ provides an upper bound on $||T||$.

\begin{lemma} \label{lm:||T|| <= ||A||}
$||T|| \le ||A||$.
\end{lemma}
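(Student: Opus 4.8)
The plan is to compare, cycle by cycle, the length of the string $e_C$ that \textsf{MGREEDY-RC} places in $T$ for a cycle $C = s_{i_1}', \dots, s_{i_r}', s_{i_1}'$ against the length of the associated string $x_C$ produced by \cref{lm:extract critical rotation from cycle}, and to argue that $|e_C| \le |x_C|$ for every cycle. Summing over all cycles then gives $\|T\| = \sum_C |e_C| \le \sum_C |x_C| = \|A\|$, since \cref{lm:extract critical rotation from cycle} associates exactly one $x_C$ with each cycle of $CYC(G_S)$ and these cycles partition $S$ (one of $s_i,\bar s_i^R$ per index).

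The key step is to identify what string \textsf{MGREEDY-RC} actually contributes for cycle $C$. When the algorithm closes a path into a cycle by moving a string $x = \langle s_{i_1}', \dots, s_{i_r}' \rangle$ from $S$ to $T$, the string added to $T$ is $\langle s_{i_1}', \dots, s_{i_r}' \rangle$ itself (up to the rotation of the cycle's vertex order and possibly a reverse-complement, neither of which changes the length). So I would set $e_C := \langle s_{i_1}', \dots, s_{i_r}' \rangle$, and note $|e_C|$ is invariant under cyclic rotation of the index sequence by \cref{lm:rotation of cycle are equivalent} (equivalent strings sharing the same first vertex have the same length; more directly, $|\langle x_1,\dots,x_r\rangle| = \sum_{j=1}^{r-1}\op{dist}(x_j,x_{j+1}) + |x_r| = w(C) + \op{ov}(s_{i_r}',s_{i_1}')$ which, after rotation, is $w(C)$ plus an overlap term, all of which are cyclic invariants). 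Now invoke property (1) of \cref{lm:extract critical rotation from cycle}: the string $\langle s_{i_{j+1}}', \dots, s_{i_r}', s_{i_1}', \dots, s_{i_j}' \rangle$ is a \emph{suffix} of $x_C$. Since this string is just a cyclic rotation of the index sequence, it has length $|e_C|$; being a suffix of $x_C$ forces $|e_C| \le |x_C|$.

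I would then assemble the argument as follows. First, fix a cycle $C$ and recall from the discussion preceding \cref{lm:GREEDY-RC 0.5 compression} (and the cycle-cover interpretation of \textsf{MGREEDY-RC}) that the element of $T$ corresponding to $C$ is $\langle s_{i_1}',\dots,s_{i_r}'\rangle$ for the vertices of $C$ in cyclic order, up to reverse complement — hence of length exactly $|e_C|$ as defined above. Second, observe $|e_C|$ is unchanged if we start the cyclic listing at $i_{j+1}$ instead of $i_1$. Third, apply property (1) to conclude $|e_C| \le |x_C|$. Fourth, sum over $C \in CYC(G_S)$: since \textsf{MGREEDY-RC} produces exactly one $T$-element per cycle, $\|T\| = \sum_{C} |e_C| \le \sum_{C}|x_C| = \|A\|$.

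The main obstacle is bookkeeping rather than depth: I must be careful that the string \textsf{MGREEDY-RC} commits to $T$ for cycle $C$ genuinely has the form $\langle s_{i_1}',\dots,s_{i_r}'\rangle$ (as opposed to, say, the longer $y_C$ or the closed walk $\langle s_{i_1}',\dots,s_{i_r}',s_{i_1}'\rangle$), and that reverse-complementing and cyclic rotation — both of which the algorithm's merging may introduce — preserve length. Once the length-invariance of $e_C$ under these operations is pinned down, the inequality $|e_C|\le|x_C|$ is immediate from property (1), and the lemma follows by summation with no further estimates needed.
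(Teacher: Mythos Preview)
Your proposal has a genuine gap at the step where you claim that $|e_C|$ is ``unchanged if we start the cyclic listing at $i_{j+1}$ instead of $i_1$.'' This is false. Your own computation shows
\[
|\langle s_{i_1}',\dots,s_{i_r}'\rangle| = w(C) + \op{ov}(s_{i_r}',s_{i_1}'),
\]
and the same computation for the rotation starting at $i_{j+1}$ gives
\[
|\langle s_{i_{j+1}}',\dots,s_{i_r}',s_{i_1}',\dots,s_{i_j}'\rangle| = w(C) + \op{ov}(s_{i_j}',s_{i_{j+1}}').
\]
Only $w(C)$ is a cyclic invariant; the trailing overlap depends on where the cycle is broken. \cref{lm:rotation of cycle are equivalent} says merely that the rotations are \emph{equivalent} in the sense that their factors are cyclic shifts of one another, not that they have equal length.

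What you are missing is the one structural fact about \textsf{MGREEDY-RC} that the paper exploits: because the algorithm always merges the pair with the \emph{largest} overlap, the self-edge it picks to close the path into a cycle is the one with the \emph{smallest} overlap along $C$. Hence $\op{ov}(s_{i_r}',s_{i_1}') \le \op{ov}(s_{i_j}',s_{i_{j+1}}')$ for every $j$, which gives the inequality
\[
|\langle s_{i_1}',\dots,s_{i_r}'\rangle| \le |\langle s_{i_{j+1}}',\dots,s_{i_r}',s_{i_1}',\dots,s_{i_j}'\rangle| \le |x_C|,
\]
the last step by property~(1). In other words, the $T$-string is the \emph{shortest} cyclic rotation, not an arbitrary one of equal length. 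With this observation your cycle-by-cycle summation goes through; without it, the argument does not.
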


\begin{proof}
Consider a cycle $C = s_{i_1}', s_{i_2}', \dots, s_{i_r}', s_{i_1}'$ in $CYC(G_S)$, and let $j$ be the index whose existence is guaranteed by \cref{lm:extract critical rotation from cycle}.  
By property (1) of that lemma, we have
\[
|\langle s_{i_{j+1}}', \dots, s_{i_r}', s_{i_1}', \dots, s_{i_j}' \rangle| \le |x_C|.
\] 

Since \textsf{MGREEDY-RC} greedily merges string pairs with the maximum overlap, the cycle-closing edge from $s_{i_r}'$ to $s_{i_1}'$ has the smallest overlap among edges in the cycle. 
Hence,
\[
\op{ov}(s_{i_r}', s_{i_1}') \le \op{ov}(s_{i_k}', s_{i_{k+1}}') \quad \text{for all } 1 \le k < r.
\]

The string produced by \textsf{MGREEDY-RC} is $\langle s_{i_1}', \dots, s_{i_r}' \rangle$, whose length can be computed as
\begin{align*}
|\langle s_{i_1}', \dots, s_{i_r}' \rangle|
&= |\op{pref}(s_{i_1}', s_{i_2}') \op{pref}(s_{i_2}', s_{i_3}') \dots \op{pref}(s_{i_{r-1}}', s_{i_r}') s_{i_r}'| \\
&= |\op{pref}(s_{i_1}', s_{i_2}') \op{pref}(s_{i_2}', s_{i_3}') \dots \op{pref}(s_{i_{r-1}}', s_{i_r}') \op{pref}(s_{i_r}', s_{i_1}')| 
   + \op{ov}(s_{i_r}', s_{i_1}') \\
&= w(C) + \op{ov}(s_{i_r}', s_{i_1}').
\end{align*}

On the other hand, for the rotated sequence starting at index $j+1$, we have
\begin{align*}
|\langle s_{i_{j+1}}', \dots, s_{i_r}', s_{i_1}', \dots, s_{i_j}' \rangle|
&= |\op{pref}(s_{i_{j+1}}', s_{i_{j+2}}') \dots \op{pref}(s_{i_{r-1}}', s_{i_r}') \op{pref}(s_{i_r}', s_{i_1}')| \\
&\quad + |\op{pref}(s_{i_1}', s_{i_2}') \dots \op{pref}(s_{i_j}', s_{i_{j+1}}')| + \op{ov}(s_{i_{j}}', s_{i_{j+1}}') \\
&= w(C) + \op{ov}(s_{i_{j}}', s_{i_{j+1}}').
\end{align*}

Since $\op{ov}(s_{i_r}', s_{i_1}') \le \op{ov}(s_{i_{j}}', s_{i_{j+1}}')$, we conclude that
\[
|\langle s_{i_1}', \dots, s_{i_r}' \rangle| \le |\langle s_{i_{j+1}}', \dots, s_{i_r}', s_{i_1}', \dots, s_{i_j}' \rangle|.
\]

Summing over all cycles in $CYC(G_S)$ yields $||T|| \le ||A||$.
\end{proof}

From the previous lemma, we have established that $||T|| \le ||A||$. 
Our next goal is to obtain an upper bound on $||A||$. 
To this end, consider the SCS-RC problem on $A$, and let $\op{OV}(A)$ denote the maximal overlap in $A$, defined as
\[
\op{OV}(A) = \sum_{j=1}^{k-1} \op{ov}(x_{C_{j}}', x_{C_{j+1}}'),
\]
where $C_1, \dots, C_k$ are the $k = |CYC(G_S)|$ cycles in $CYC(G_S)$, ordered according to an optimal superstring $\langle x_{C_1}', \dots, x_{C_k}' \rangle$ for the SCS-RC instance on $A$.

Then, by definition, we have
\[
||A|| = \op{OPT}(A) + \op{OV}(A).
\]
An upper bound on $\op{OPT}(A)$ can be obtained in a manner similar to \cref{lm:OPT(T) <= 2OPT(S)}.

\begin{lemma} \label{lm:OPT(A) upperbound}
$\op{OPT}(A) \le \op{OPT}(S) + w(CYC(G_S))$.
\end{lemma}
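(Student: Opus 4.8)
The plan is to mimic the proof of \cref{lm:OPT(T) <= 2OPT(S)}: exhibit an approximate solution for the SCS-RC instance $A$ whose length is at most $\op{OPT}(S) + w(CYC(G_S))$. As before, for each cycle $C = s_{i_1}', \dots, s_{i_r}', s_{i_1}'$ in $CYC(G_S)$ pick a single representative vertex, let $S' \subseteq S \cup \bar S^R$ be the chosen representatives (one per cycle), and let $u$ be an optimal SCS-RC solution for $S'$, so $|u| \le \op{OPT}(S)$. The twist here is that the natural representative is no longer an arbitrary $s_{i_1}'$ but the string $x_C$ supplied by \cref{lm:extract critical rotation from cycle}; since $x_C$ is contained in $y_C = \langle s_{i_j}', \dots, s_{i_r}', s_{i_1}', \dots, s_{i_j}' \rangle$ (property~(2)), $x_C$ already ``uses up'' only vertices of $C$, and $y_C$ itself has length $w(C) + |s_{i_j}'| \le w(C) + \op{OPT}(S)/\ldots$ — more usefully, $y_C$ consists of one full traversal of the cycle $C$ starting and ending at $s_{i_j}'$, so $|y_C| = w(C) + |s_{i_j}'|$ and $x_C$ is a substring of it.

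Concretely, I would choose the representative of cycle $C$ to be $s_{i_j}'$ (the vertex at which the critical rotation is anchored). In $u$, either $s_{i_j}'$ or $\bar{s_{i_j}'}^R$ occurs; in the first case replace one occurrence by the string $y_C = \op{pref}(s_{i_j}', s_{i_{j+1}}')\cdots\op{pref}(s_{i_r}', s_{i_1}')\op{pref}(s_{i_1}', s_{i_2}')\cdots\op{pref}(s_{i_{j-1}}', s_{i_j}')\, s_{i_j}'$, and in the second case replace $\bar{s_{i_j}'}^R$ by $\bar{y_C}^R$; this is well-defined because $s_{i_j}'$ is both a prefix and a suffix of $y_C$ (and symmetrically for the reverse complement). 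Each such replacement increases the length of $u$ by exactly $w(C)$, so after doing it for every cycle we obtain a string of length $|u| + \sum_{C} w(C) = |u| + w(CYC(G_S)) \le \op{OPT}(S) + w(CYC(G_S))$. Finally, by property~(2) of \cref{lm:extract critical rotation from cycle}, $x_C$ is a substring of $y_C$, hence the resulting string contains $x_C$ (or its reverse complement, matching the orientation in which $s_{i_j}'$ appeared) for every $C \in CYC(G_S)$, so it is a valid approximate solution for the SCS-RC instance $A$. This gives $\op{OPT}(A) \le \op{OPT}(S) + w(CYC(G_S))$.

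The only delicate point — and the main thing to get right — is the orientation bookkeeping: when $u$ contains $\bar{s_{i_j}'}^R$ rather than $s_{i_j}'$, the substituted block is $\bar{y_C}^R$, and one must check that $\bar{y_C}^R$ contains $\bar{x_C}^R$, which is the reverse complement of $x_C$ and hence an admissible witness for the element $x_C \in A$ under the SCS-RC requirement. This follows immediately from the fact that $\bar{\cdot}^R$ is an involution that maps substrings to substrings, but it should be stated explicitly since it is exactly the place where the reverse-complement setting differs from the classical argument. Everything else is the same prefix/suffix telescoping as in \cref{lm:OPT(T) <= 2OPT(S)}.
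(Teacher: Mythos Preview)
Your proposal is correct and follows essentially the same argument as the paper: pick the anchor vertex $s_{i_j}'$ of each cycle as the representative, take an optimal SCS-RC superstring $u$ of the representatives, and replace each (oriented) occurrence of $s_{i_j}'$ by $y_C$ (or $\bar{y_C}^R$), using property~(2) of \cref{lm:extract critical rotation from cycle} to conclude that the resulting string already covers every $x_C$. The paper phrases the last step via an intermediate set $B = \{y_C\}$ and the inequality $\op{OPT}(A) \le \op{OPT}(B)$, but this is purely cosmetic; your explicit remark on the orientation bookkeeping (that $\bar{y_C}^R$ contains $\bar{x_C}^R$) is a welcome clarification the paper leaves implicit.
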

\begin{proof}
Let $B = \{y_C \mid C \in CYC(G_S)\}$.  
By property (2) of \cref{lm:extract critical rotation from cycle}, each $y_C$ contains $x_C$, which implies $\op{OPT}(A) \le \op{OPT}(B)$.  

Let $y_C = \langle s_{i_j}', \dots, s_{i_r}', s_{i_1}', \dots, s_{i_j}' \rangle$, and select $s_{i_j}'$ as the representative of the cycle $C$.  
Let $S' \subseteq S$ denote the set of these representatives over all cycles $C \in CYC(G_S)$.  

Let $u$ be an optimal solution of SCS-RC for $S'$. Clearly, $|u| \le \op{OPT}(S)$.  
For each string $s_{i_j}' \in S'$, either $s_{i_j}'$ or its reverse complement $\bar{s_{i_j}'}^R$ occurs in $u$.  
If $s_{i_j}'$ occurs, replace once occurence of it with the original $y_C$; otherwise, replace $\bar{s_{i_j}'}^R$ with $\bar{y_C}^R$.  
This replacement increases the length of $u$ by $w(CYC(G_S))$, yielding an approximate solution for the SCS-RC instance $B$.  
Hence, $\op{OPT}(B) \le \op{OPT}(S) + w(CYC(G_S))$.  
Together with $\op{OPT}(A) \le \op{OPT}(B)$, we obtain
$\op{OPT}(A) \le \op{OPT}(S) + w(CYC(G_S))$.
\end{proof}

For each cycle $C = s_{i_1}', s_{i_2}', \dots, s_{i_r}', s_{i_1}'$ in $CYC(G_S)$,  
we can consider its reverse complement,  
$\bar{C}^R = \bar{s_{i_1}'}^R, \bar{s_{i_r}'}^R, \dots, \bar{s_{i_2}'}^R, \bar{s_{i_1}'}^R$.  
For this $\bar{C}^R$, there exists a string $x_{\bar{C}^R}$ that satisfies the properties stated in \cref{lm:extract critical rotation from cycle}.  
However, since $x_{\bar{C}^R}$ and $\bar{x_C}^R$ may differ, we cannot directly derive the upper bound $\op{OV}(A) \le 1.5\,w(CYC(G_S))$ from \cref{lm:overlap rotation lemma,lm:extract critical rotation from cycle}.

Therefore, when the reverse complement $\bar{x_C}^R$ appears in the optimal solution, we must apply \cref{lm:overlap weak lemma} instead of \cref{lm:overlap rotation lemma}.
The key observation is that, by also considering the reverse complement of the entire optimal superstring, the proportion of such strings can be limited to at most half.

\begin{lemma} \label{lm:OV(A) upperbound}
$\op{OV}(A) \le 1.75\,w(CYC(G_S))$.
\end{lemma}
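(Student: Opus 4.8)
The plan is to estimate the $k-1$ overlaps realized by an optimal superstring of $A$ edge by edge: whenever the right-hand endpoint of an edge sits in its critical rotation I apply the overlap rotation lemma, and otherwise I fall back on the weaker $\op{period}+\op{period}$ bound of \cref{lm:overlap weak lemma}; the extra cost charged to these ``bad'' edges is then controlled by passing, if necessary, to the reverse complement of the whole superstring. Concretely, let $W=\langle x_{C_1}',\dots,x_{C_k}'\rangle$ be the optimal superstring of the SCS-RC instance on $A$ used to define $\op{OV}(A)$, so that $\op{OV}(A)=\sum_{j=1}^{k-1}\op{ov}(x_{C_j}',x_{C_{j+1}}')$. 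Call a cycle $C_j$ \emph{forward} if $x_{C_j}'=x_{C_j}$ and \emph{reversed} if $x_{C_j}'=\bar{x_{C_j}}^R$. Each $x_{C_j}'$ has period $w(C_j)$ (the period of $x_{C_j}$ equals $w(C_j)$ by \cref{lm:rotation of cycle are equivalent} and the period formula for cycle strings, and a reverse complement has the same period), and by \cref{lm:cycles are inequivalent} the strings $x_{C_j}'$ and $x_{C_{j+1}}'$, as well as $x_{C_j}'$ and $\op{factor}(x_{C_{j+1}})^\infty$, are inequivalent.

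For the edge bounds: if $C_{j+1}$ is forward, then $x_{C_{j+1}}'=x_{C_{j+1}}$ is a prefix of its critical rotation $\op{factor}(x_{C_{j+1}})^\infty$ (\cref{lm:extract critical rotation from cycle}(4)), so \cref{lm:overlap rotation lemma} gives
\[
\op{ov}(x_{C_j}',x_{C_{j+1}}')\le\op{ov}\!\bigl(x_{C_j}',\op{factor}(x_{C_{j+1}})^\infty\bigr)\le w(C_j)+\tfrac12\,w(C_{j+1}).
\]
If $C_{j+1}$ is reversed, then $x_{C_{j+1}}'=\bar{x_{C_{j+1}}}^R$ need not be a critical rotation, \cref{lm:overlap rotation lemma} is unavailable, and I use only \cref{lm:overlap weak lemma}: $\op{ov}(x_{C_j}',x_{C_{j+1}}')\le w(C_j)+w(C_{j+1})$.

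Now set $c_j=\tfrac12$ if $C_j$ is forward and $c_j=1$ if $C_j$ is reversed, and let $W_{\mathrm{rev}}$ be the total weight of the reversed cycles, so $\sum_{j=1}^{k}c_j\,w(C_j)=\tfrac12 w(\op{CYC}(G_S))+\tfrac12 W_{\mathrm{rev}}$. Summing the edge bounds, which all have the form $\op{ov}(x_{C_j}',x_{C_{j+1}}')\le w(C_j)+c_{j+1}w(C_{j+1})$, and telescoping,
\[
\op{OV}(A)\le\sum_{j=1}^{k-1}w(C_j)+\sum_{j=2}^{k}c_j\,w(C_j)\le w(\op{CYC}(G_S))+\bigl(\tfrac12 w(\op{CYC}(G_S))+\tfrac12 W_{\mathrm{rev}}\bigr)=\tfrac32 w(\op{CYC}(G_S))+\tfrac12 W_{\mathrm{rev}}.
\]
It remains to arrange $W_{\mathrm{rev}}\le\tfrac12 w(\op{CYC}(G_S))$. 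For this, note that $\bar W^R$ is again an approximate solution of the SCS-RC instance on $A$ of the same length as $W$ — for every cycle $C$ it contains $\bar{x_C'}^R$, the other member of $\{x_C,\bar{x_C}^R\}$ — hence again optimal, and its overlap decomposition again sums to $\op{OV}(A)$; moreover passing from $W$ to $\bar W^R$ turns every forward cycle into a reversed one and vice versa, so the reversed-cycle weights of $W$ and of $\bar W^R$ sum to $w(\op{CYC}(G_S))$. Replacing $W$ by $\bar W^R$ if necessary, $W_{\mathrm{rev}}\le\tfrac12 w(\op{CYC}(G_S))$, and the displayed inequality yields $\op{OV}(A)\le\tfrac32 w(\op{CYC}(G_S))+\tfrac14 w(\op{CYC}(G_S))=1.75\,w(\op{CYC}(G_S))$.

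The step I expect to be the crux is exactly the loss of the critical rotation on reversed cycles: in the standard-SCS analysis every edge enjoys the $\tfrac12$-factor on its right endpoint, which would give the sharper $1.5\,w(\op{CYC}(G_S))$, whereas here up to half of the edges must instead be charged through \cref{lm:overlap weak lemma}, and the reverse-complement symmetry of the whole superstring is precisely what caps the total weight of such edges — at the price of the extra $\tfrac14 w(\op{CYC}(G_S))$. (One could still shave a little off edges whose \emph{left} endpoint is reversed by rewriting $\op{ov}(x_{C_j}',x_{C_{j+1}}')=\op{ov}(\bar{x_{C_{j+1}}'}^R,\bar{x_{C_j}'}^R)$ and applying \cref{lm:overlap rotation lemma} to the now-forward string $\bar{x_{C_j}'}^R$, but this refinement is not needed for the bound claimed here.)
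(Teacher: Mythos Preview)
Your proof is correct and follows essentially the same approach as the paper: bound each overlap edge by the overlap rotation lemma when the right endpoint is forward and by \cref{lm:overlap weak lemma} when it is reversed, then exploit the reverse-complement symmetry of the optimal superstring to control the reversed-cycle weight. The only cosmetic difference is that the paper \emph{averages} the two bounds obtained from $W$ and $\bar W^R$, whereas you \emph{select} whichever of the two has $W_{\mathrm{rev}}\le\tfrac12\,w(\op{CYC}(G_S))$; both routes give the same $1.75\,w(\op{CYC}(G_S))$.
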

\begin{proof}
Let $C_1, \dots, C_k$ be the $k = |CYC(G_S)|$ cycles in $CYC(G_S)$,
ordered according to an optimal solution for the SCS-RC instance on $A$.
Let $v = \langle x_{C_1}', \dots, x_{C_k}' \rangle$ denote this optimal superstring. Recall that for any $i \neq j$, the strings $x_{C_i}'$ and $x_{C_j}'$ are inequivalent, as discussed in \cref{sec:prelim Periodicity}.
Then,
\begin{align*}
\op{OV}(A)
&= \sum_{j=1}^{k-1} \op{ov}(x_{C_j}', x_{C_{j+1}}') \\
&\le \sum_{j=2}^{k}
\begin{cases}
w(C_{j-1}) + 0.5\,w(C_j), & \text{if } x_{C_j}' = x_{C_j} \quad \text{(by \cref{lm:overlap rotation lemma})}, \\[4pt]
w(C_{j-1}) + w(C_j), & \text{if } x_{C_j}' = \bar{x_{C_j}}^R \quad \text{(by \cref{lm:overlap weak lemma})}.
\end{cases}
\end{align*}

Rearranging the terms gives
\[
\op{OV}(A)
\le \sum_{j=1}^{k}
\begin{cases}
1.5\,w(C_j), & \text{if } x_{C_j}' = x_{C_j}, \\[4pt]
2\,w(C_j), & \text{if } x_{C_j}' = \bar{x_{C_j}}^R.
\end{cases}
\]

Next, consider the reverse complement of the optimal solution,
$\bar{v}^R = \langle \bar{x_{C_k}'}^R, \dots, \bar{x_{C_1}'}^R \rangle$,
which is also an optimal superstring for the same instance.
In this reversed solution, every occurrence of $x_{C_j}$ is replaced by
$\bar{x_{C_j}}^R$, and vice versa. Hence, we also have
\[
\op{OV}(A)
\le \sum_{j=1}^{k}
\begin{cases}
2\,w(C_j), & \text{if } x_{C_j}' = x_{C_j}, \\[4pt]
1.5\,w(C_j), & \text{if } x_{C_j}' = \bar{x_{C_j}}^R.
\end{cases}
\]
By adding the two inequalities and dividing by two, we obtain
\[
2\,\op{OV}(A) \le \sum_{j=1}^{k} 3.5\,w(C_j),
\quad\text{that is,}\quad
\op{OV}(A) \le 1.75\,w(CYC(G_S)).
\]
\end{proof}

From \cref{lm:OPT(A) upperbound,lm:OV(A) upperbound}, we obtain the following upper bound on $||A||$:
\begin{align*}
||A|| &= \op{OPT}(A) + \op{OV}(A) \\
&\le \op{OPT}(S) + w(CYC(G_S)) + 1.75\,w(CYC(G_S)) \\
&= \op{OPT}(S) + 2.75\,w(CYC(G_S)).
\end{align*}
Using \cref{lm:||T|| <= ||A||}, we then have
\[
||T|| \le ||A|| \le \op{OPT}(S) + 2.75\,w(CYC(G_S)) \le 3.75\,\op{OPT}(S),
\]
which completes the proof of \cref{thm:MGREEDY-RC 3.75-approx}.

\section{Conclusion and Future Work}
We have improved the approximation guarantees of \textsf{MGREEDY-RC} and \textsf{TGREEDY-RC} for the SCS-RC problem by leveraging the overlap rotation lemma and certain properties of reverse complements.
The remaining gaps compared to the standard SCS problem are mainly in two aspects:
\begin{bracketenumerate}
    \item The approximation guarantee of \textsf{MGREEDY-RC}.
    \item The compression ratio.
\end{bracketenumerate}

For the first aspect, we improved the approximation guarantee from $4$ to $3.75$.  
However, for the standard SCS problem, the best known ratio of the corresponding \textsf{MGREEDY} algorithm is $\frac{\sqrt{67} + 2}{3} \approx 3.396$ \cite{englert_et_al:LIPIcs.ISAAC.2023.29},  
so closing this gap remains an open research direction.

For the second aspect, we used the $\frac{1}{2}$-approximation compression ratio algorithm in \textsf{GREEDY-RC}.  
In the case of the standard SCS, the problem can be reduced to the maximum asymmetric traveling salesman problem (MAX-ATSP), which is known to have a $\frac{2}{3}$-approximation algorithm \cite{Kaplan.MAXATSP.2/3approx, paluch_et_al:LIPIcs.STACS.2012.501}.
However, for the SCS-RC problem, we must consider clusters containing two vertices each, which prevents the direct application of approximation algorithms for MAX-ATSP.  
Another research direction is to find a suitable reduction related to the MAX-ATSP problem, or to consider an entirely different approach to improve the compression ratio.

%%
%% Bibliography
%%

%% Please use bibtex, 

\bibliography{lipics-v2021-sample-article}

@article{Blum.et.al,
author = {Blum, Avrim and Jiang, Tao and Li, Ming and Tromp, John and Yannakakis, Mihalis},
title = {Linear approximation of shortest superstrings},
year = {1994},
issue_date = {July 1994},
publisher = {Association for Computing Machinery},
address = {New York, NY, USA},
volume = {41},
number = {4},
issn = {0004-5411},
url = {https://doi.org/10.1145/179812.179818},
doi = {10.1145/179812.179818},
abstract = {We consider the following problem: given a collection of strings s1,…, sm, find the shortest string s such that each si appears as a substring (a consecutive block) of s. Although this problem is known to be NP-hard, a simple greedy procedure appears to do quite well and is routinely used in DNA sequencing and data compression practice, namely: repeatedly merge the pair of (distinct) strings with maximum overlap until only one string remains. Let n denote the length of the optimal superstring. A common conjecture states that the above greedy procedure produces a superstring of length O(n) (in fact, 2n), yet the only previous nontrivial bound known for any polynomial-time algorithm is a recent O(n log n) result.We show that the greedy algorithm does in fact achieve a constant factor approximation, proving an upper bound of 4n. Furthermore, we present a simple modified version of the greedy algorithm that we show produces a superstring of length at most 3n. We also show the superstring problem to be MAXSNP-hard, which implies that a polynomial-time approximation scheme for this problem is unlikely.},
journal = {J. ACM},
month = jul,
pages = {630–647},
numpages = {18},
keywords = {shortest common superstring, approximation algorithm}
}

@article{JIANG1992195,
title = {A note on shortest superstrings with flipping},
journal = {Information Processing Letters},
volume = {44},
number = {4},
pages = {195-199},
year = {1992},
issn = {0020-0190},
doi = {https://doi.org/10.1016/0020-0190(92)90084-9},
url = {https://www.sciencedirect.com/science/article/pii/0020019092900849},
author = {Tao Jiang and Ming Li and Ding-Zhu Du},
keywords = {Algorithms, shortest common superstring, approximation algorithms, DNA sequencing},
abstract = {Approximation algorithms for the shortest common superstring problem have been studied recently. This paper considers an interesting variation of the problem: For a given set of strings S={s1,…,sm}, find a shortest superstring that contains either si or sRi for each i. The problem may have applications in DNA sequencing practice when orientations of the fragments in the target DNA molecule are unknown. We give a simple greedy algorithm and prove a 4n approximation bound for it.}
}

@article{FICI2016245,
title = {On the greedy algorithm for the Shortest Common Superstring problem with reversals},
journal = {Information Processing Letters},
volume = {116},
number = {3},
pages = {245-251},
year = {2016},
issn = {0020-0190},
doi = {https://doi.org/10.1016/j.ipl.2015.11.015},
url = {https://www.sciencedirect.com/science/article/pii/S0020019015002094},
author = {Gabriele Fici and Tomasz Kociumaka and Jakub Radoszewski and Wojciech Rytter and Tomasz Waleń},
keywords = {Analysis of algorithms, Shortest Common Superstring, Reversal, Greedy algorithm},
abstract = {We study a variation of the classical Shortest Common Superstring (SCS) problem in which a shortest superstring of a finite set of strings S is sought containing as a factor every string of S or its reversal. We call this problem Shortest Common Superstring with Reversals (SCS-R). This problem has been introduced by Jiang et al. [9], who designed a greedy-like algorithm with length approximation ratio 4. In this paper, we show that a natural adaptation of the classical greedy algorithm for SCS has (optimal) compression ratio 12, i.e., the sum of the overlaps in the output string is at least half the sum of the overlaps in an optimal solution. We also provide a linear-time implementation of our algorithm.}
}

@Inbook{Gevezes2014,
author="Gevezes, Theodoros P.
and Pitsoulis, Leonidas S.",
title="The Shortest Superstring Problem",
bookTitle="Optimization in Science and Engineering: In Honor of the 60th Birthday of Panos M. Pardalos",
year="2014",
publisher="Springer New York",
address="New York, NY",
pages="189--227",
abstract="The shortest superstring problem (SSP) is a combinatorial optimization problem which has attracted the interest of many researchers due to its applications in computational molecular biology and in computer science. The SSP is an NP-hard problem, and therefore great effort to develop exact algorithms for it has not been made. On the other hand, several approximation and heuristic algorithms have been implemented indicating the strong effectiveness of the greedy strategies to this problem. Variations of these algorithms can be parallelized providing computational strength in solving real-world instances. Polynomially solvable versions of the problem obtained under specific restrictions to its parameters reveal the boundaries between hard and easy cases. The computational bounds on the approximability of the SSP are a realization of its Max-SNP-hardness, but the weak proved values of them reflect the potential strength of the greedy approximation techniques. The strength of the greedy methods for the SSP is enhanced also by the asymptotic behaviour and the smoothed analysis of the problem in random and real-world instances, respectively. All these issues are presented in this chapter in a concise way covering the whole relevant literature, revealing the knowledge that is already conquered, and paving the path for further development in the study of shortest superstrings. The order of the sections highlights the pass from hardness complexity results for the SSP to efficient algorithms for the problem based on greedy strategies, and to theoretical results that establish the strength of the greedy techniques.",
isbn="978-1-4939-0808-0",
doi="10.1007/978-1-4939-0808-0_10",
url="https://doi.org/10.1007/978-1-4939-0808-0_10"
}

@article{MyersJr+2016+126+132,
url = {https://doi.org/10.1515/itit-2015-0047},
title = {A history of DNA sequence assembly},
author = {Eugene W. Myers Jr},
pages = {126--132},
volume = {58},
number = {3},
journal = {it - Information Technology},
doi = {doi:10.1515/itit-2015-0047},
year = {2016},
lastchecked = {2025-11-08}
}

@InProceedings{englert_et_al:LIPIcs.ISAAC.2023.29,
  author =	{Englert, Matthias and Matsakis, Nicolaos and Vesel\'{y}, Pavel},
  title =	{{Approximation Guarantees for Shortest Superstrings: Simpler and Better}},
  booktitle =	{34th International Symposium on Algorithms and Computation (ISAAC 2023)},
  pages =	{29:1--29:17},
  series =	{Leibniz International Proceedings in Informatics (LIPIcs)},
  ISBN =	{978-3-95977-289-1},
  ISSN =	{1868-8969},
  year =	{2023},
  volume =	{283},
  editor =	{Iwata, Satoru and Kakimura, Naonori},
  publisher =	{Schloss Dagstuhl -- Leibniz-Zentrum f{\"u}r Informatik},
  address =	{Dagstuhl, Germany},
  URL =		{https://drops.dagstuhl.de/entities/document/10.4230/LIPIcs.ISAAC.2023.29},
  URN =		{urn:nbn:de:0030-drops-193319},
  doi =		{10.4230/LIPIcs.ISAAC.2023.29},
  annote =	{Keywords: Shortest Superstring problem, Approximation Algorithms}
}

@article{KAPLAN200513.greedy3.5n,
title = {The greedy algorithm for shortest superstrings},
journal = {Information Processing Letters},
volume = {93},
number = {1},
pages = {13-17},
year = {2005},
issn = {0020-0190},
doi = {https://doi.org/10.1016/j.ipl.2004.09.012},
url = {https://www.sciencedirect.com/science/article/pii/S0020019004002698},
author = {Haim Kaplan and Nira Shafrir},
keywords = {Algorithms, Analysis of algorithms, Approximation algorithms, Shortest superstring}
}

@InProceedings{CPM1996.2+2/3approx,
author="Armen, Chris
and Stein, Clifford",
editor="Hirschberg, Dan
and Myers, Gene",
title="A 2 2/3-approximation algorithm for the shortest superstring problem",
booktitle="Combinatorial Pattern Matching",
year="1996",
publisher="Springer Berlin Heidelberg",
address="Berlin, Heidelberg",
pages="87--101",
abstract="Given a collection of strings S={\{}s1, ..., sn{\}} over an alphabet $\Sigma$, a superstring $\alpha$ of S is a string containing each si as a substring; that is, for each i, 1≤i≤n, $\alpha$ contains a block of {\textbrokenbar}si{\textbrokenbar} consecutive characters that match si exactly. The shortest superstring problem is the problem of finding a superstring $\alpha$ of minimum length.",
isbn="978-3-540-68390-2"
}

@article{Breslaure.1997.OverlapRotationLemma,
author = {Breslauer, Dany and Jiang, Tao and Jiang, Zhigen},
title = {Rotations of Periodic Strings and Short Superstrings},
year = {1997},
issue_date = {Aug. 1997},
publisher = {Academic Press, Inc.},
address = {USA},
volume = {24},
number = {2},
issn = {0196-6774},
url = {https://doi.org/10.1006/jagm.1997.0861},
doi = {10.1006/jagm.1997.0861},
abstract = {This paper presents two simple approximation algorithms for the shortest superstring problem with approximation ratios 223 ( 2.67) and22542( 2.596). The framework of our improved algorithms is similar to that of previous algorithms in the sense that they construct a superstring by computing some optimal cycle covers on the distance graph of the given strings and then break and merge the cycles to finally obtain a Hamiltonian path, but we make use of new bounds on the overlap between two strings. We prove that for each periodic semiinfinite string =a1a2 of periodq, there exists an integerk, such that forany(finite) stringsof periodpwhich isinequivalentto , the overlap betweensand therotation k=akak+1 is at mostp+12q. Moreover, ifp q, then the overlap betweensand k is not larger than 23(p+q). The bounds are tight. In the previous shortest superstring algorithmsp+qwas used as the standard (tight) bound on overlap between two strings with periodspandq.},
journal = {J. Algorithms},
month = aug,
pages = {340–353},
numpages = {14}
}

@article{SIAM.2+1/2approx,
author = {Sweedyk, Z.},
title = {A 2$\frac{1}{2}$-Approximation Algorithm for Shortest Superstring},
journal = {SIAM Journal on Computing},
volume = {29},
number = {3},
pages = {954-986},
year = {2000},
doi = {10.1137/S0097539796324661},
URL = {https://doi.org/10.1137/S0097539796324661},
}

@inbook{SODA13.2+11/23approx,
author = {Marcin Mucha},
title = {Lyndon Words and Short Superstrings},
year = {2013},
publisher = {Proceedings of the 2013 Annual ACM-SIAM Symposium on Discrete Algorithms (SODA)},
pages = {958-972},
doi = {10.1137/1.9781611973105.69},
URL = {https://epubs.siam.org/doi/abs/10.1137/1.9781611973105.69},
}

@inproceedings{STOC22.ImprovedApproximateGuarantees,
author = {Englert, Matthias and Matsakis, Nicolaos and Vesel\'{y}, Pavel},
title = {Improved approximation guarantees for shortest superstrings using cycle classification by overlap to length ratios},
year = {2022},
isbn = {9781450392648},
publisher = {Association for Computing Machinery},
address = {New York, NY, USA},
url = {https://doi.org/10.1145/3519935.3520001},
doi = {10.1145/3519935.3520001},
abstract = {In the Shortest Superstring problem, we are given a set of strings and we are asking for a common superstring, which has the minimum number of characters. The Shortest Superstring problem is NP-hard and several constant-factor approximation algorithms are known for it. Of particular interest is the GREEDY algorithm, which repeatedly merges two strings of maximum overlap until a single string remains. The GREEDY algorithm, being simpler than other well-performing approximation algorithms for this problem, has attracted attention since the 1980s and is commonly used in practical applications. Tarhio and Ukkonen (TCS 1988) conjectured that GREEDY gives a 2-approximation. In a seminal work, Blum, Jiang, Li, Tromp, and Yannakakis (STOC 1991) proved that the superstring computed by GREEDY is a 4-approximation, and this upper bound was improved to 3.5 by Kaplan and Shafrir (IPL 2005). We show that the approximation guarantee of GREEDY is at most (13+√57)/6 ≈ 3.425. Furthermore, we prove that the Shortest Superstring can be approximated within a factor of (37+√57)/18≈ 2.475, improving slightly upon the currently best 2 11/23-approximation algorithm by Mucha (SODA 2013).},
booktitle = {Proceedings of the 54th Annual ACM SIGACT Symposium on Theory of Computing},
pages = {317–330},
numpages = {14},
keywords = {approximation algorithms, shortest common superstring},
location = {Rome, Italy},
series = {STOC 2022}
}

@article{Kaplan.MAXATSP.2/3approx,
author = {Kaplan, Haim and Lewenstein, Moshe and Shafrir, Nira and Sviridenko, Maxim},
title = {Approximation algorithms for asymmetric TSP by decomposing directed regular multigraphs},
year = {2005},
issue_date = {July 2005},
publisher = {Association for Computing Machinery},
address = {New York, NY, USA},
volume = {52},
number = {4},
issn = {0004-5411},
url = {https://doi.org/10.1145/1082036.1082041},
doi = {10.1145/1082036.1082041},
abstract = {A directed multigraph is said to be d-regular if the indegree and outdegree of every vertex is exactly d. By Hall's theorem, one can represent such a multigraph as a combination of at most n2 cycle covers, each taken with an appropriate multiplicity. We prove that if the d-regular multigraph does not contain more than ⌊d/2⌋ copies of any 2-cycle then we can find a similar decomposition into n2 pairs of cycle covers where each 2-cycle occurs in at most one component of each pair. Our proof is constructive and gives a polynomial algorithm to find such a decomposition. Since our applications only need one such a pair of cycle covers whose weight is at least the average weight of all pairs, we also give an alternative, simpler algorithm to extract a single such pair.This combinatorial theorem then comes handy in rounding a fractional solution of an LP relaxation of the maximum Traveling Salesman Problem (TSP) problem. The first stage of the rounding procedure obtains two cycle covers that do not share a 2-cycle with weight at least twice the weight of the optimal solution. Then we show how to extract a tour from the 2 cycle covers, whose weight is at least 2/3 of the weight of the longest tour. This improves upon the previous 5/8 approximation with a simpler algorithm. Utilizing a reduction from maximum TSP to the shortest superstring problem, we obtain a 2.5-approximation algorithm for the latter problem, which is again much simpler than the previous one.For minimum asymmetric TSP, the same technique gives two cycle covers, not sharing a 2-cycle, with weight at most twice the weight of the optimum. Assuming triangle inequality, we then show how to obtain from this pair of cycle covers a tour whose weight is at most 0.842 log2 n larger than optimal. This improves upon a previous approximation algorithm with approximation guarantee of 0.999 log2 n. Other applications of the rounding procedure are approximation algorithms for maximum 3-cycle cover (factor 2/3, previously 3/5) and maximum asymmetric TSP with triangle inequality (factor 10/13, previously 3/4).},
journal = {J. ACM},
month = jul,
pages = {602–626},
numpages = {25},
keywords = {Approximation algorithms}
}

@InProceedings{paluch_et_al:LIPIcs.STACS.2012.501,
  author =	{Paluch, Katarzyna and Elbassioni, Khaled and van Zuylen, Anke},
  title =	{{Simpler Approximation of the Maximum Asymmetric Traveling Salesman Problem}},
  booktitle =	{29th International Symposium on Theoretical Aspects of Computer Science (STACS 2012)},
  pages =	{501--506},
  series =	{Leibniz International Proceedings in Informatics (LIPIcs)},
  ISBN =	{978-3-939897-35-4},
  ISSN =	{1868-8969},
  year =	{2012},
  volume =	{14},
  editor =	{D\"{u}rr, Christoph and Wilke, Thomas},
  publisher =	{Schloss Dagstuhl -- Leibniz-Zentrum f{\"u}r Informatik},
  address =	{Dagstuhl, Germany},
  URL =		{https://drops.dagstuhl.de/entities/document/10.4230/LIPIcs.STACS.2012.501},
  URN =		{urn:nbn:de:0030-drops-34129},
  doi =		{10.4230/LIPIcs.STACS.2012.501},
  annote =	{Keywords: approximation algorithm, maximum asymmetric traveling salesman problem}
}

% \appendix

\end{document}